\documentclass{article}

\PassOptionsToPackage{numbers, compress}{natbib}
\usepackage[numbers,square]{natbib}

\usepackage{PRIMEarxiv}

\usepackage[utf8]{inputenc} % allow utf-8 input
\usepackage[T1]{fontenc}    % use 8-bit T1 fonts
\usepackage{hyperref}       % hyperlinks
\usepackage{url}            % simple URL typesetting
\usepackage{booktabs}       % professional-quality tables
\usepackage{amsfonts}       % blackboard math symbols
\usepackage{nicefrac}       % compact symbols for 1/2, etc.
\usepackage{microtype}      % microtypography
\usepackage{xcolor}         % colors
\usepackage{microtype}
\usepackage{graphicx}
\usepackage{subcaption}
\usepackage{enumitem}
\usepackage{algorithm}
\usepackage{algorithmic}
\usepackage{booktabs}
\usepackage{amsmath}
\usepackage{float}
\usepackage{placeins}
\usepackage{tikz}
\usepackage{threeparttable}
\usetikzlibrary{shapes, arrows.meta, positioning, fit, backgrounds}
\usepackage{booktabs}   % For professional looking rules (lines)
\usepackage{multirow}   % For multi-row table cells
\usepackage[table]{xcolor} % For row coloring
\usepackage{caption}    % For better caption spacing
\usepackage{makecell}
\renewcommand{\algorithmiccomment}[1]{// \texttt{\footnotesize\raggedright #1}}

\newcommand{\PTO}[1]{PTO}
\newcommand{\Acc}[1]{accuracy-based loss function}

\usepackage{amsmath,amsfonts,bm}
\usepackage{enumitem}
\def\eqref#1{equation~\ref{#1}}
\def\1{\bm{1}}

\DeclareMathAlphabet{\mathsfit}{\encodingdefault}{\sfdefault}{m}{sl}
\SetMathAlphabet{\mathsfit}{bold}{\encodingdefault}{\sfdefault}{bx}{n}

\DeclareMathOperator*{\argmax}{arg\,max}

\usepackage{url}

\usepackage{hyperref}

\usepackage{amsmath}
\usepackage{amssymb}
\usepackage{mathtools}
\usepackage{amsthm}

\usepackage[capitalize,noabbrev]{cleveref}

\newenvironment{theorem*}[2][]
  {\par\noindent\textbf{Restatement of Theorem #2\if\relax#1\relax\else\ (#1)\fi.}\itshape}
  {\par}
\newenvironment{definition*}[2][]
  {\par\noindent\textbf{Restatement of Definition #2\if\relax#1\relax\else\ (#1)\fi.}\itshape}
  {\par}
\newenvironment{assumption*}[2][]
  {\par\noindent\textbf{Restatement of Assumption #2\if\relax#1\relax\else\ (#1)\fi.}}
  {\par}
\newenvironment{proposition*}[2][]
  {\par\noindent\textbf{Restatement of Proposition #2\if\relax#1\relax\else\ (#1)\fi.}\itshape}
  {\par}

\newenvironment{corollary*}[2][]
  {\par\noindent\textbf{Restatement of Corollary #2\if\relax#1\relax\else\ (#1)\fi.}\itshape}
  {\par}
  
 \theoremstyle{plain}
\newtheorem{theorem}{Theorem}[section]
\newtheorem{proposition}[theorem]{Proposition}

\newtheorem{corollary}[theorem]{Corollary}
\newtheorem{claim}[theorem]{Claim}
\theoremstyle{definition}
\newtheorem{definition}[theorem]{Definition}
\newtheorem{assumption}[theorem]{Assumption}
\theoremstyle{remark}
\newtheorem{remark}[theorem]{Remark}

\usepackage[textsize=tiny]{todonotes}

\title{Strategic Decision Focused Learning}
\author{%
  Tinashe Handina\thanks{Corresponding author: \texttt{thandina@caltech.edu}.} \\
  California Institute of Technology\\
  \And
  Yuehan Diao \\
  University of Chicago \\
  \AND
  Adam Wierman \\
  California Institute of Technology \\
  \And
  Eric Mazumdar \\
  California Institute of Technology \\
}

\begin{document}

\maketitle

\begin{abstract}
Machine learning (ML) predictions are increasingly being used to guide decision-making, giving rise to the problem of decision-focused learning (DFL) where predictors are optimized for downstream decision quality rather than accuracy alone. However, most existing work assumes a single decision-maker optimizing in isolation. This paper formalizes \emph{strategic} decision-focused learning, where an ML system predicts an exogenous state that some agents observe before playing a game. For example, a park ranger may predict wildlife locations to allocate anti-poaching patrols against strategic poachers. While the exogenous state is unaffected by agent actions, predictions influence agents' strategies and the resulting equilibrium. We find that strategic considerations fundamentally change the learning problem. In particular, we show the prediction accuracy–equilibrium payoff landscape can be non-monotonic---i.e., better predictions can degrade performance. We propose algorithmic approaches to address these challenges and validate them across benchmarks in wildlife conservation and infrastructure protection. Our theory and experiments highlight the importance of accounting for strategic interactions when designing predictors.
\end{abstract}

%pre The realization of the full potential of ML, however, will require an adaptation of conventional paradigms to fit the particularities of strategic interactions. This paper provides a framework for utilizing ML predictions in decision-making in strategic settings. We argue for an end-to-end approach where the impacts of a prediction on an agent's utility are explicitly taken into account during training as opposed to just optimizing the accuracy of a prediction. We motivate this approach theoretically by illustrating 

\section{Introduction}
\label{sec: introduction}

Machine Learning (ML) systems are playing an increasingly vital role in shaping decisions across a wide range of domains. From commerce \cite{commerceML}, to security \cite{krever2025guard} \cite{tang2024hierarchical}, to government \cite{mhasawade2021machine} \cite{zhang2024socialenvironment}, decision makers are increasingly relying on the outputs of ML systems as the basis upon which they optimize their actions. Driving the proliferation of ML is its demonstrable ability to understand and predict the evolution of complex systems reasonably well. As ML systems become an ever more consequential tool used for decision making, understanding when standard training objectives are \emph{structurally inadequate} — not merely suboptimal — is of vital importance. This paper identifies one such setting.

Learning with the view that the output of a learning algorithm is going to be used in a downstream optimization problem is an active area of research often called decision-focused learning\cite{mandi2024decision}\cite{wilder2019melding}\cite{shah2022decision}. Research in this area, to date, has often focused on a single decision maker leveraging ML outputs to optimize a utility or cost function in isolation \cite{shah2022decision}. In many consequential applications, however, decision makers find that they are interacting with other actors who often act in a \emph{strategic} manner. This introduces complexities to the optimization problem as the agent now not only has to consider the impact of the decision on the cost function, but also has to take into account how other actors may respond to any particular decision they make.
\begin{figure}[t]
\centering
\includegraphics[width=0.85\linewidth]{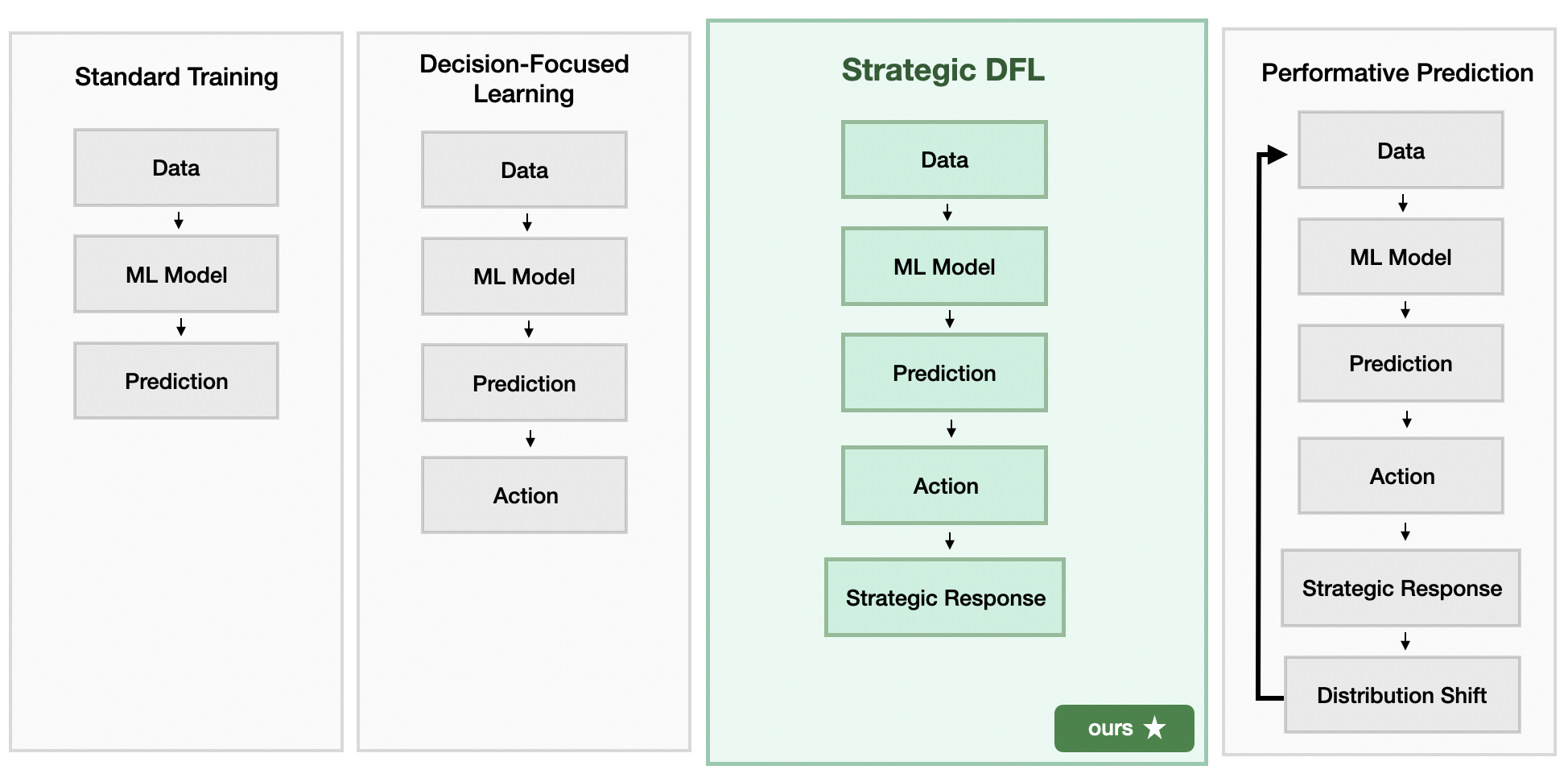}
\caption{Comparison of different learning paradigms. In contrast to classical and decision-focused learning, SDFL operates in a strategic context, significantly complicating the learning task. Unlike performative prediction or strategic classification, the data distribution is exogenous—free from strategic feedback loops.
% classical learning and decision-focused learning, the data-driven prediction learned in our proposed framework of SDFL is used in a strategic context---significantly complicating the task of learning. In contrast to other paradigms that seek to incorporate the downstream effects of strategic decision-making into the learning framework---e.g., performative prediction or strategic classification---the distribution of data is exogenous and is not subject to strategic feedback loops.
}
\label{fig:overview}
\end{figure}

In this work, we formalize this problem area which we call \textbf{strategic decision-focused learning} (SDFL). In this setting, an agent is trying to leverage ML systems to aid with the prediction of some \emph{exogenous} state that affects a game that they play. Such settings are prevalent across a wide range of domains, including:
\begin{itemize}
\item \textbf{Electricity markets:} ML predictors can be used by producers competing in grid markets to forecast exogenous renewable generation conditions (e.g., wind and solar availability), which are then used to inform pricing strategies that interact with those of competing producers.

\item \textbf{Motorsport strategy:} In Formula One, teams leverage predictive models to forecast exogenous variables such as tire degradation rates and safety car probabilities, which shape pit-stop and fuel strategies in direct competition with opposing teams.
\item \textbf{Security and conservation:} Predicting population distributions over a geographic region — such as wildlife density or civilian movement — informs patrol strategies in adversarial security games (we evaluate this concretely in conservation and infrastructure protection domains in this work).
\end{itemize}
We find that the explicit characterization of the downstream strategic interaction leads to fundamental — not merely practical — considerations for the design and use of machine learning systems. To that end, this work seeks to address the following question:
\begin{quote}
\centering
{\em How should agents develop and make use of Machine Learning for exogenous state predictions in strategic interactions?}
\end{quote}
Traditional approaches to this problem in the strategic domain have often relied on first predicting exogenous states such as demand \cite{tarallo2019machinedemand} or population distributions \cite{robson2017savanna}, then optimizing for a decision. We show that this approach is not merely suboptimal but also suffers from a fundamental issue of misalignment. As we show, in strategic settings no monotone transformation of a supervised accuracy loss can serve as a reliable proxy for strategic utility. This failure is structural, arising from the interaction between prediction error and equilibrium response, and persists regardless of model capacity or data quantity.

\textbf{Our contributions:}
This paper makes three conceptual contributions and one algorithmic contribution.

\textbf{(C1) Problem formalization.} We formalize strategic decision-focused learning as a distinct problem within the decision-focused learning paradigm. The SDFL problem is defined for general games and solution concepts: an agent predicts an exogenous state and commits to an equilibrium strategy based on that prediction, while opponents respond under the \emph{true} state. This split between the predicted world the agent plans for and the true world opponents respond to is the source of all the difficulty we characterize. The Stackelberg equilibrium is one tractable instantiation we study in depth; the formalization and hardness results hold for general solution concepts.

\textbf{(C2) Structural hardness.} We prove that the mapping from prediction accuracy to equilibrium payoff is \emph{provably non-monotone}: a less accurate prediction can strictly outperform a more accurate one (Proposition~3.2), while the true state in, some instances, may be suboptimal (Proposition~3.3). Together, these results characterize the structure of the accuracy--utility landscape. This is not a worst-case pathology: it is a feature of the landscape, implying that standard predict-then-optimize pipelines are not well motivated for SDFL.

\textbf{(C3) Geometric characterization of hardness.} We prove that, for a class of games, as game dimensionality grows, an increasing proportion of strategies lie near best-response boundaries (Theorem~3.7), explaining geometrically \emph{why} small prediction errors may induce large utility swings. This result characterizes why the hardness can intensify as the action space grows.

\textbf{(C4) Initial algorithmic framework.} Motivated by C1--C3, we propose end-to-end prediction approaches as an initial algorithmic response to SDFL. We present these as proof-of-concept rather than definitive solutions, in the same spirit as early task-loss surrogates in single-agent DFL that preceded formal analysis \cite{donti2017task, wilder2019melding}. Experiments across two real-world security game domains validate that end-to-end approaches outperform predict-then-optimize despite higher supervised prediction error, corroborating our theoretical findings that accuracy and strategic utility are misaligned.

\subsection{Related work}
 Learning how to make use of machine learning for the purposes of downstream decision making has emerged as an important area of research~\cite{mandi2024decision, elmachtoub2022smart}. The key idea powering much of this work is that embedding the optimization problem of interest into the training loop and backpropagating decision-relevant gradients can lead to better performance than first optimizing a prediction and then optimizing for a decision~\cite{donti2017task, Yeh2024EndtoEndCC}. While standard approaches to decision-focused learning often focus on a single agent optimizing in isolation~\cite{shah2022decision}, we focus on extending and formalizing this paradigm for the strategic domain. In particular, we consider what happens when the decision is observed by strategic agents whose choices are affected by the optimization decision itself. We call this setting \emph{strategic decision-focused learning}. Crucially, the strategic setting is not merely a harder instance of standard DFL but a qualitatively different problem: the accuracy--utility misalignment we identify has no analogue in the single-agent case. Closest to our work is game-focused learning in security games~\cite{perrault2020endtoend, wang2020scalable}: there the payoffs are known, and a defender learns how an attacker reacts and then best-responds to what is learned. In our framework, the state that sets the payoffs, which our opponents already observe, is what is predicted. The predictor then plays an equilibrium of the game it thinks it is in. This introduces new forms of complexity: for example, predicting the true state in some instances may be suboptimal (Proposition~3.3).

\paragraph{Learning in strategic environments:} Recent years have seen a surge of research on deploying learning algorithms within strategic environments \cite{mert, mazumdar}. This focus on the intersection of learning and strategic interactions has given rise to, among others, the domains of strategic classification~\cite{hardt16strategic}, adversarial machine learning~\cite{madry_adversarial}, and multi-agent reinforcement learning~\cite{zhang2021multi}. Closely related is performative prediction~\cite{perdomo}, wherein the act of prediction itself causes a distributional shift in the data the model is trained on. Our work takes an orthogonal track: rather than the prediction changing the input distribution, we ask how to properly account for exogenous state information within a strategic decision-making task where the prediction shapes the equilibrium that emerges. The exogeneity of the data distribution is what separates SDFL from performative prediction; the presence of strategic opponents is what separates it from standard DFL. A key theme arising from this body of work is that learning in strategic environments is often non-trivial, with phenomena such as instability of algorithms resulting in convergence to cycles and chaos~\cite{cycles}, and strategic manipulations leading to improved causal discovery~\cite{gaminghelps}. Of particular interest is the insight that small changes in learning parameters may lead to different equilibria with different utilities~\cite{zrinc21follow}. Previous work has also found that model selection within the broader context of learning in strategic environments is complicated, with the model expressivity--payoff landscape in general being non-monotonic~\cite{handina2024understanding}. Our work contributes a precise characterization of this non-monotonicity in the SDFL setting, grounding it in the geometry of best-response regions rather than model expressivity.

\section{Problem Formulation}
\label{sec: preliminaries}

We formalize strategic decision-focused learning as a distinct problem within the 
decision-focused learning paradigm. We begin with a general formulation applicable 
to any game and solution concept, then instantiate it for the Stackelberg setting 
we study in depth.

In the standard decision-focused learning (DFL) setting, a single agent uses a 
learned model $f_\theta : \mathcal{X} \to \mathcal{S}$ to predict an uncertain 
parameter $s \in \mathcal{S}$ and solves a downstream optimization problem whose 
objective depends on $s$. The central insight of DFL is that training $f_\theta$ 
to minimize a supervised loss can be suboptimal when downstream decision quality 
is what matters~\cite{elmachtoub2022smart, wilder2019melding}. Critically, most existing 
DFL assumes the agent optimizes \emph{in isolation} --- no other agent observes or 
responds to the decision. As we show, this isolation assumption is not merely a 
simplification but a load-bearing one: its violation changes the learning problem 
qualitatively.

\subsection{Strategic Decision-Focused Learning}

We consider a game $G = (\mathcal{N}, \{\mathcal{A}_i\}_{i \in \mathcal{N}}, 
\{U_i\}_{i \in \mathcal{N}}, \mathcal{S})$ where $\mathcal{N}$ is a finite set 
of agents, $\mathcal{A}_i$ is the action set of agent $i$, $\mathcal{S} \subseteq 
\mathbb{R}^d$ is an exogenous state space, and $U_i : \prod_{j} \mathcal{A}_j 
\times \mathcal{S} \to \mathbb{R}$ is agent $i$'s utility. Each agent selects a 
mixed strategy $\alpha_i \in \Delta(\mathcal{A}_i)$.

The state $s \in \mathcal{S}$ is \emph{exogenous}: drawn from a fixed distribution 
$P_\mathcal{S}$ independently of any agent's actions. A primary agent $i$ observes 
only a dataset $\mathcal{D} = \{(x^{(t)}, s^{(t)})\}_{t=1}^T$ and learns a 
predictor $f_\theta : \mathcal{X} \to \mathcal{S}$. All other agents observe the 
true realized state $s_r$.

Let $\sigma : \mathcal{S} \to \prod_{j} \Delta(\mathcal{A}_j)$ denote a 
\emph{solution mapping} returning an equilibrium profile for a given state. Agent 
$i$ commits to $\sigma_i(\hat{s})$ based on their prediction $\hat{s} = f_\theta(x)$, 
while opponents respond under the true state $s_r$. The realized joint profile is:
\(\sigma(\hat{s}, s_r) = \big(\sigma_i(\hat{s}),\; \mathrm{Eq}_{-i}(\sigma_i(\hat{s}),\; s_r)\big)\) where $\mathrm{Eq}_{-i}(\alpha_i, s)$ is the equilibrium response of agents $-i$ 
to a fixed $\alpha_i$ under state $s$. The \emph{strategic decision-focused 
learning problem} is then:
\begin{equation}
    \theta^* = \arg\max_\theta \; \mathbb{E}_{(x, s_r) \sim \mathcal{D}}\left[ 
    U_i\!\left(\sigma(f_\theta(x),\, s_r),\; s_r\right) \right]
    \label{eq:sdfl_objective}
\end{equation}

The split between the predicted world agent $i$ plans for and the true world 
opponents respond to is the source of all the difficulty we characterize in 
Section~\ref{sec: limits}. We note that this formulation is distinct from 
performative prediction~\cite{perdomo}, wherein predictions shift $P_\mathcal{S}$ 
through a feedback loop --- here the data distribution is fixed and predictions 
affect only the equilibrium. Our framework is clearly distinguished from existing learning problems in Figure~\ref{fig:overview}.

\subsection{Stackelberg Instantiation}

The SDFL objective~\eqref{eq:sdfl_objective} is defined for general games and 
solution concepts; the hardness results of Section~\ref{sec: limits} hold in 
this generality. For the algorithmic contributions and experiments of 
Sections~\ref{sec: e2e}, we instantiate SDFL under 
the \emph{Stackelberg solution concept}, chosen for tractability and its natural 
fit to security applications.

We focus on two-player interactions in which agent $i$ leads and agent $-i$ 
follows. Given predicted state $\hat{s}$, the leader computes:
\( \bar{\alpha}_i(\hat{s}) = \arg\max_{\alpha_i \in \Delta(\mathcal{A}_i)} 
    U_i\!\left(\alpha_i,\; BR_{-i}(\alpha_i, \hat{s}),\; \hat{s}\right)\)
where $BR_{-i}(\alpha_i, s) = \arg\max_{\alpha_{-i}} U_{-i}(\alpha_i, \alpha_{-i}, 
s)$. Since the follower responds under the true state $s_r$, the leader's realized 
utility is:
\(U_i^{\mathrm{realized}}(\hat{s}, s_r) = U_i\!\left(\bar{\alpha}_i(\hat{s}),\; 
    BR_{-i}(\bar{\alpha}_i(\hat{s}),\; s_r),\; s_r\right)\).
Stackelberg equilibria are the natural solution concept for security settings, 
where attackers are assumed to observe both the state and the defender's committed 
strategy~\cite{yang2014adaptive}.

\paragraph{Illustrative example.} A game ranger predicts elephant locations to 
allocate patrols against strategic poachers. The exogenous state --- the true 
elephant distribution --- is unknown to the ranger but (in the worst case) observed by the poachers. 
The ranger commits to a patrol strategy based on their prediction; the poachers 
best-respond knowing the truth. The ranger's realized utility depends not on how 
accurately they predicted elephant locations, but on how well their patrol strategy 
performs against opponents who know the true state. This gap between prediction 
quality and strategic performance is precisely what~\eqref{eq:sdfl_objective} 
captures.

\section{Limitations of Accuracy based Predictions in Games}
\label{sec: limits}
We establish that standard accuracy-based approaches to SDFL suffer from a problem of mis-alignment. We prove two independent results that together characterize why: first, 
the mapping from prediction accuracy to equilibrium payoff is provably 
non-monotone, implying no monotone transformation of a supervised loss 
can serve as a reliable proxy for strategic utility; second, for a broad 
class of games, the fraction of strategies near best-response boundaries 
grows with game dimensionality, explaining geometrically why small 
prediction errors induce large utility swings. While we
instantiate these results in the Stackelberg setting for concreteness, 
the non-monotonicity result holds for the general SDFL 
objective~\eqref{eq:sdfl_objective} under other solution mappings, $\sigma$, 
satisfying the conditions of Section~\ref{sec: preliminaries}.

Given a dataset $\mathcal{D}$ relating to the exogenous state $s$, standard approaches to learning in strategic environments first find some $s_{pred}$ by minimizing a statistical loss function $\mathcal{L}(\phi, \mathcal{D})$ which is proportional to some accuracy metric $\phi$ for the game instance. After having learned a prediction of the state, the agent then looks to optimize their decision-making given their state prediction, which was yielded from the previous estimation step.
\begin{definition}[Accuracy Metric]
    \label{def:accuracy_metric}
    Let $\mathcal{S} \subseteq \mathbb{R}^n$ for some $n \in \mathbb{N}$, and let $s_r \in \mathcal{S}$ be the true realization of the state. An accuracy metric for the game instance is a function $\phi: \mathcal{S} \rightarrow \mathbb{R}_{\geq 0}$ defined by $\phi(s) = \|s - s_r\|$ for some symmetric norm $\|\cdot\|$ on $\mathbb{R}^n$.
\end{definition}

To extend the previous example of elephant conservation, we assume that the ranger is given some dataset $\mathcal{D}$ on the location of elephants within the reserve. Standard approaches to the strategic decision-focused learning problem would have the ranger find $s_{pred}$ (a prediction of where the elephants would be), and then from there, they would optimize to find a patrol strategy that maximizes their utility. We call this approach to learning \textbf{``Predict Then Optimize''} (PTO).

We begin by showing the limitations of this particular learning paradigm. We find that it is possible for a state prediction to be less accurate yet lead the decision maker into selecting an action that yields better payoff when compared to some other, more accurate state prediction. This shows that utility within the context of strategic environments can be non-monotonic with respect to prediction accuracy.  

\begin{proposition}
There exists a game instance with true state $s_r$, and predictions $s_{p,1}$ and $s_{p,2}$, wherein $s_{p,1}$ is strictly more accurate than $s_{p,2}$ under every accuracy metric $\phi$, yet player $i$'s utility at the equilibrium induced by $s_{p,1}$ is strictly lower than that induced by $s_{p,2}$:
\[
U_i\!\left(\alpha^*(s_{p,1}),\, s_r\right)
\;<\;
U_i\!\left(\alpha^*(s_{p,2}),\, s_r\right),
\]
where $\alpha^*(s) = (\alpha_i^*(s),\, BR_{-i}(\alpha_i^*(s), s_r))$ denotes the equilibrium joint strategy profile when the primary plays assuming state $s$, the rest of the agents see the true state $s_r$, and utility is evaluated at $s_r$.

    \label{prop:non-monotone}
\end{proposition}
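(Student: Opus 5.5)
Since the statement is existential, the plan is to build one explicit small Stackelberg game and check it by hand. The main difficulty is the phrase ``strictly more accurate under \emph{every} accuracy metric''. The argument will handle this by choosing the two predictions so that one error vector is a scalar multiple of the other: $s_{p,1} - s_r = t\,(s_{p,2}-s_r)$ with $0<t<1$. Then for any norm, $\|s_{p,1}-s_r\| = t\|s_{p,2}-s_r\| < \|s_{p,2}-s_r\|$ by homogeneity, provided $s_{p,2}\neq s_r$. Symmetry of the norm is not even needed. This reduces the problem to a one-parameter family of states $s_r + \lambda d$, and it suffices to show that realized leader utility is non-monotone in $\lambda\ge 0$ along this ray.

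The instance will be a two-target security game with a scalar exogenous state $s\in[0,1]$, the value of target 1, with target 2 worth $1-s$. The leader covers target 1 with probability $c$. An attack on target $j$ gives the attacker the target's value times its uncovered probability, and the defender gets the negative of that. Under predicted state $\hat s$, the Stackelberg leader equalizes the attacker's payoffs: $\hat s(1-c)=(1-\hat s)c$, so $c^*(\hat s)=\hat s$. The follower, knowing $s_r$, attacks the target maximizing $s_r(1-c)$ versus $(1-s_r)c$, with ties broken in the leader's favor. This gives realized utility $-\max\{s_r(1-\hat s),(1-s_r)\hat s\}$. That function is maximized at $\hat s = s_r$ and is monotone along rays, so this naive game is actually monotone. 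It therefore needs to be perturbed.

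The fix I would try first is to make the defender's losses asymmetric, i.e.\ non-zero-sum, while keeping the attacker's payoffs as above. Take defender losses $L_1\gg L_2$ for the two targets. Put $s_r$ slightly below the indifference point. With the accurate prediction the leader sits essentially at $c^*=s_r$, where a tie or near-tie occurs, and a slight over-prediction error $s_{p,1}=s_r+\epsilon$ flips the attacker to the costly target 1. A larger over-prediction $s_{p,2}=s_r+2\epsilon$ pushes coverage further, but that is still a flip to target 1. So I would instead choose the direction $d$ and the geometry so that the small error crosses a best-response boundary into the bad region while the large error exits it. This needs at least three targets, or a two-dimensional state with a nonconvex bad region along the ray.

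Concretely, I would use three targets with state $s\in\mathbb{R}^3$ and coverage computed by the standard ORIGAMI equalization. I would pick $s_r$ and $d$ numerically so that the following holds along $s_r+\lambda d$:
\begin{itemize}
\item at $\lambda=t$, the attacker's true best response is the target with the largest defender loss;
\item at $\lambda=1$, the leader's over-protection of that target makes the attacker prefer a cheap target instead.
\end{itemize}
I expect the main obstacle to be this construction step: getting the small-error best response on the high-loss target while the larger error lands on a low-loss one, with strict inequalities throughout. I would search small integer payoffs and then verify the final instance by direct computation of the coverages and the attacker's payoffs.
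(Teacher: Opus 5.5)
Your collinear reduction is sound, and for the ``every accuracy metric'' clause it is simpler than the paper's route. The paper compares non-collinear errors and needs Ky Fan dominance of the sorted error vectors to get $\|s_1-s_0\|\le\frac12\|s_2-s_0\|$ for symmetric norms; homogeneity gives you strict domination under every norm.

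The gap is that the proposition is purely existential and you never produce the instance. ``Pick $s_r$ and $d$ numerically'' is the whole content of the proof, and it is left undone. Moreover, the mechanism you plan to search for cannot occur in the family you describe (one resource, attacker payoff $s_j(1-c_j)$, attacker-equalizing coverage $\hat s_j(1-c_j)=v$ on the support $S$):
\begin{itemize}
\item The attacker's true payoff is $v/r_j$ on a covered target and $s_{r,j}\le v/r_j$ on an uncovered one (when $r_j>0$), where $r_j(\lambda)=1+\lambda d_j/s_{r,j}$. The $r_j$ keep the same order for every $\lambda>0$.
\item If $h$ minimizes $r_j$ over $S$, its uncovered probability $(|S|-1)/\sum_{k\in S}\hat s_h/\hat s_k$ is nondecreasing in $\lambda$ while $S$ is fixed.
\item From this, one checks that once the high-loss target $h$ is the true best response while covered, it stays the best response, with nondecreasing uncovered probability, for as long as it keeps any coverage.
\end{itemize}
So ``over-protecting'' $h$ at $\lambda=1$ cannot pull the attack away. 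Along a ray you need a support change: the large error must strip $h$ of all coverage, and the attacker must still prefer another target because $s_{r,h}$ is small. For example, take unit defender losses, $s_r=(\tfrac35,1,1)$, $s_{p,2}=(0,\tfrac12,1)$, and $s_{p,1}=s_r+\tfrac1{10}(s_{p,2}-s_r)=(\tfrac{27}{50},\tfrac{19}{20},1)$. Under $s_{p,1}$, target $1$ is left uncovered with probability $\tfrac{1900}{2003}$ and is attacked. Under $s_{p,2}$, the coverage $(0,\tfrac13,\tfrac23)$ draws the attack to target $2$. The realized utilities are $-\tfrac{1900}{2003}<-\tfrac23$.

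A much shorter repair is available: your reduction never needed $t>0$. If $s_{p,1}-s_r=t\,(s_{p,2}-s_r)$ with $-1<t<0$, then still $\|s_{p,1}-s_r\|=|t|\,\|s_{p,2}-s_r\|$ for every norm. With predictions on opposite sides of $s_r$, your first two-target zero-sum game, which you discarded as monotone along rays, already works. Realized loss is $s_r(1-\hat s)$ under under-prediction and $(1-s_r)\hat s$ under over-prediction. Taking $s_r=\tfrac34$, $s_{p,1}=\tfrac58$, $s_{p,2}=1$ gives realized utilities $-\tfrac{9}{32}<-\tfrac14$, although $|s_{p,1}-s_r|=\tfrac18<\tfrac14=|s_{p,2}-s_r|$ (and on $\mathbb{R}$ every norm is a multiple of $|\cdot|$). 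This is the same asymmetry the paper exploits: its $s_1$ under-predicts the high-value target $5$ and its $s_2$ over-predicts it. Whichever route you take, the proof is complete only once such an explicit instance is written down and its equilibria and best responses are verified.
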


In addition, we find that there exist game instances in which the true state $s_r$ is not the utility-optimal prediction for a player: a strategically calibrated misprediction can induce an equilibrium that, when played against the true state, yields strictly higher utility than the equilibrium induced by $s_r$ itself.

\begin{proposition}
There exists a game instance with true state $s_r$ and player $i$ such that there exists a prediction $s_p \in \mathcal{S}$ satisfying:
\[
U_i\!\left(\alpha^*(s_p),\, s_r\right) \;>\; U_i\!\left(\alpha^*(s_r),\, s_r\right),
\]
where $\alpha^*(s) = (\alpha_i^*(s),\, BR_{-i}(\alpha_i^*(s), s_r))$ denotes the equilibrium joint strategy profile when the primary plays assuming state $s$, the rest of the agents see the true state $s_r$, and utility is evaluated at $s_r$.

    \label{prop:suboptimal-truth}
\end{proposition}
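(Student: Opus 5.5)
The statement is existential, so I will prove it with one explicit, small Stackelberg security game. The idea is that the leader, computing a Stackelberg equilibrium under the \emph{true} state, exploits the follower's tie-breaking and sits exactly on a best-response boundary. There, any discrepancy between the leader's model and the follower's actual behaviour tends to hurt the leader. A misprediction, by contrast, can push the leader to commit to a strategy strictly inside the region where the follower responds favourably. This is not a contradiction, because the Stackelberg commitment under $s_r$ is optimal only when the follower breaks ties in the leader's favour. If we evaluate with a pessimistic (or generic, non-leader-favouring) tie-breaking rule for $BR_{-i}(\cdot, s_r)$, the truthful commitment can be strictly worse than a slightly shifted one. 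A cleaner route avoids relying on tie-breaking: choose a game in which the state enters the leader's \emph{and} follower's payoffs differently. For example, let the leader's utility weight on a target be $s$ while the follower's reward on that target is $2s$. Then misestimating $s$ changes the leader's commitment in a direction that happens to realign it with the follower's true response.

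Concretely, I would take a two-target security game. The defender covers target $1$ with probability $c\in[0,1]$ and target $2$ with probability $1-c$, and the attacker picks a target. The state $s\in[0,1]$ is the value of target $1$, with target $2$ having value $1$.

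Attacker utility on target $j$: $(1-c_j)v_j(s) - c_j$.

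Defender utility if target $j$ is attacked: $-(1-c_j)w_j(s)$.

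Here $v_1(s)=s$ and $v_2=1$. I choose the defender's valuation $w_1(s)$ to be a non-monotone function of $s$, or a constant independent of $s$, so that the leader's planned commitment is driven only by where she believes the attacker indifference point lies. The leader computes $\bar\alpha_i(\hat s)$ as the coverage $c(\hat s)$ that makes the attacker indifferent under $\hat s$. With leader-favourable tie-breaking, she plans for the attack on the target that is better for her. The realized attacker responds under $s_r$, and I fix the tie-breaking rule of $BR_{-i}$ to be attacker-favourable (or worst-case for the leader), which the statement permits since $BR$ is a set-valued argmax we must resolve. Then at $\hat s=s_r$ the attacker is exactly indifferent and, by that rule, attacks the target that is worse for the leader. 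For some $s_p$ on the appropriate side of $s_r$, $c(s_p)$ lies strictly inside the region where the attacker strictly prefers the leader-favourable target. By continuity, the payoff loss from the small coverage shift is smaller than the gap between the two attack outcomes.

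The steps are:
\begin{enumerate}
\item Write the attacker indifference coverage $c^\dagger(s)$ in closed form and note that it is strictly monotone in $s$.
\item Compute the leader's realized utility at $\hat s=s_r$: the attack lands on the leader-unfavourable target.
\item Pick $s_p=s_r\pm\epsilon$ so that $c^\dagger(s_p)$ moves into the favourable region, and show the realized utility there is within $O(\epsilon)$ of the favourable-attack payoff.
\item Choose $w_j$ so that the favourable-attack payoff minus the unfavourable one is a positive constant, and take $\epsilon$ small.
\end{enumerate}

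The main obstacle is step 2, the tie-breaking at the truthful commitment. Under strong Stackelberg equilibrium with the leader-favourable tie resolved identically in planning and realization, the truthful prediction is optimal whenever the leader's and follower's state-dependence match, and the proposition would fail. So I must either fix a tie-breaking convention for $BR_{-i}$ that differs from the one the leader plans with, or make the leader's own payoff depend on $s$ in a way that causes her to mis-plan even at $s_r$. I would first try the tie-breaking route, since it needs only a two-target game. If that convention is deemed unnatural, I would fall back on a three-target game. There the true-state SSE commitment is exactly indifferent between two targets, and a generic perturbation $s_p$ yields a strict preference for the leader-favourable target, making the comparison strict without any appeal to tie-breaking.
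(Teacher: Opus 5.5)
Your two-target arithmetic is right: with $w_1=w_2=1$, $s_r=0$ and $s_p=\tfrac12$ the realized utilities are $-\tfrac23$ and $-\tfrac37$. But the strict inequality comes entirely from resolving $BR_{-i}$ in the leader's favour when $\bar\alpha_i(\hat s)$ is computed and against her when utility is realized, and this is not what the statement permits. In the paper's formulation the same $BR_{-i}$ enters both $\bar\alpha_i(\hat s)$ and $U_i^{\mathrm{realized}}$, and $\alpha^*(s_r)$ must be \emph{the equilibrium} profile at the true state. Your $\alpha^*(s_r)$ is neither a strong Stackelberg equilibrium, since ties go against the leader, nor a weak one, since the pessimistic value at $s_r=0$ has supremum $-\tfrac13$, which is not attained. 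The gain is also not caused by the prediction: the leader gets it under $s_r$ simply by shifting her commitment by $\epsilon$.

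Once a single selection of $BR_{-i}$ is used throughout, no Stackelberg instance can satisfy the proposition. This holds regardless of whether the two players' state-dependence matches. At $\hat s=s_r$ the leader maximizes exactly $\bar U_i(\alpha_i):=U_i(\alpha_i,BR_{-i}(\alpha_i,s_r),s_r)$, and any prediction yields $\bar U_i(\bar\alpha_i(s_p))\le\max_{\alpha_i}\bar U_i(\alpha_i)$. So your ``cleaner route'' (state entering the two payoffs differently) cannot work. Your three-target fallback cannot work either: it still hinges on how the indifference at $\bar\alpha_i(s_r)$ is broken, so it does not avoid tie-breaking.

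The missing idea is to leave Stackelberg for a solution concept whose equilibrium at $s_r$ is \emph{not} already optimal against the opponent's true best response. Then a misprediction can act as a commitment device. The paper takes $\sigma$ to be Nash equilibrium in a strongly monotone game whose unique interior Nash equilibrium $(\alpha_i^\star,\alpha_{-i}^\star)$ is not Pareto-optimal and satisfies $\nabla_{\alpha_i}BR_{-i}(\alpha_i^\star,s_r)\neq 0$. It finds a restricted set $\Theta'\ni\alpha_i'$ on which the restricted game has equilibrium $(\alpha_i',BR_{-i}(\alpha_i',s_r))$ with $\bar U_i(\alpha_i')>\bar U_i(\alpha_i^\star)$. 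The mispredicted state $s_p=1$ then gives the primary payoff $-M$ outside $\Theta'$, leaving the opponent's payoff unchanged. Concretely, take $U_i=-\tfrac12(\alpha_i-\alpha_{-i})^2+\alpha_{-i}$ and $U_{-i}=-\tfrac12(\alpha_{-i}-\tfrac{\alpha_i}{2}-\tfrac14)^2$ on $[0,1]^2$. The Nash equilibrium is $(\tfrac12,\tfrac12)$ with value $\tfrac12$. Restricting the primary to $[\tfrac34,1]$ yields $(\tfrac34,\tfrac58)$ with true value $\tfrac{79}{128}$. Best responses there are unique, so no tie-breaking convention is involved.
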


The case for utility-aware approaches in games is further strengthened by the general volatility of the strategy-utility landscape. In particular,  we see that for a wide characterization of games, the probability that a uniformly sampled strategy is near another strategy with a \emph{different} best response increases with the dimensionality of the game. This phenomenon points to why, in some instances, small perturbations in state predictions result in completely different utilities. 
\begin{definition}
    \label{def:game_seq}
    Let $A, B$ be two players in a bi-matrix game $G= (\textbf{A, B})$. For player $A$, we define the best response region for each action $b$ of player $B$ to be $R_b^A := \{\alpha \in \Delta({\mathcal{A}})| b \in BR(\alpha)\}$.
\end{definition}

\begin{definition}
    \label{def:delta_reg}
    We say that a particular strategy $\alpha \in R^A_b$ is \textbf{$\delta-$regional} if $\exists \alpha' \in R^A_{b'}$ with $b \neq b'$ such that $\|\alpha - \alpha'\|_2 \leq \delta$.
\end{definition}
In the above definitions, we formally describe best response regions (i.e., collections of strategies that elicit the same best response from a follower) and $\delta$-regional strategies (strategies that, despite belonging to a particular best response region, are very close to another best response region). We seek to understand the prevalence of $\delta-$regional strategies. 
\begin{assumption}
\label{assumption:game_sparse}
Let $\{G_N\}_{N = 2}^{\infty}$ be a sequence of two-player bi-matrix games
in which both players have $N$ actions,
indexed by $[N] = \{1, \dots, N\}$.
For each action $b \in [N]$ for player $B$, the best-response region for player $A$, $R_{b}^{A,(N)}$, is such that (1) $R_{b}^{A,(N)} \neq \emptyset$ and (2) for all $\alpha \in R_{b}^{A,(N)}$, we have $\alpha_b \ge 2\delta_N$, where $\alpha_b$ denotes the probability that player $A$'s mixed strategy $\alpha$ places on its $b$-th action and $\{\delta_N\}$ is a sequence of positive numbers.
\end{assumption}

We briefly remark on the strength of these conditions. Condition~(1) simply requires that no player $B$ action is strictly dominated---that is, every action in the player $B$'s action set is a best response to some player $A$ mixed strategy. This is a mild non-degeneracy requirement.

Condition~(2) is more substantive but natural in a broad class of games. It asks that whenever player $B$'s best response is action $b$, player $A$ places at least probability $2\delta_N$ on their own action $b$. Since the numbering of actions is arbitrary, the condition need only hold under \emph{some} permutation. In security game formulations, for instance, the permutation pairs each attacker action $b$ with a defender action $\sigma(b)$, and the condition reduces to requiring that each attacked cell is driven by the defender's commitment to patrolling some corresponding location---a mild structural requirement.
% Condition~(2) is more substantive but remains natural in a broad class of games. Since both players' action spaces are identified with $[N]$, it asks that whenever the player $B$'s best response is action $b$, player $A$ is placing at least some probability on its own action $b$. Crucially, the condition imposes only a minimal coupling between player $A$'s support and player $B$'s induced best response. Note that the numbering of actions is arbitrary, and as such, for this condition to hold in any game, there simply needs to exist a permutation on the numbering of actions where the condition is satisfied. In security games, for instance, an attacker targets cell $b$ precisely because the defender is concentrated elsewhere; the permutation relabels actions so that attacker action $b$ is paired with defender action $\sigma(b)$, and the condition reduces to requiring that when the attacker strikes cell $b$, the defender is allocating at least some probability to the corresponding cell $\sigma(b)$---that is, each attack is driven by the defender's commitment to patrolling a particular other location.

\begin{theorem}\label{Thm:volatility}
Suppose the sequence of games $\{G_N\}_{N=2}^{\infty}$ satisfies
Assumption~\ref{assumption:game_sparse} with $\delta_N = c/N$ for some $c \in (0, \tfrac{1}{2}]$. For each $N$, let $\alpha \sim \mathrm{Unif}(\Delta(\mathcal{A}))$
in game $G_N$, then:
\[
  \mathbb{P}\!\big(\alpha \text{ is } \delta_N\text{-regional}\big)
  \;\geq\; 1 - \left(1 - \frac{c\sqrt{2}}{N}\right)^{\!N-1}.
\]
\end{theorem}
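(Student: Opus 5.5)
Throughout, write $\delta=\delta_N=c/N$. The plan is to bound the probability of the complementary event, that $\alpha$ is \emph{not} $\delta$-regional, by a Beta tail. First, note that under $\alpha\sim\mathrm{Unif}(\Delta(\mathcal{A}))$ we have $\alpha\sim\mathrm{Dirichlet}(1,\dots,1)$. Each coordinate $\alpha_j$ is therefore marginally $\mathrm{Beta}(1,N-1)$, so
\[
\mathbb{P}(\alpha_j\ge x)=(1-x)^{N-1}\quad\text{for } x\in[0,1].
\]
The target bound is exactly $1-\mathbb{P}(\alpha_j\ge c\sqrt2/N)$. So the whole argument reduces to a deterministic geometric implication: $\alpha$ non-$\delta$-regional $\Rightarrow$ a suitable coordinate of $\alpha$ is at least $\sqrt2\,\delta$.

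Second, I establish that implication. The regions $\{R^{A,(N)}_b\}_b$ cover the simplex, since every $\alpha$ has some best response. Fix $\alpha\in R_b$ that is not $\delta$-regional. Then every $\alpha'\in\Delta(\mathcal{A})$ with $\|\alpha-\alpha'\|_2\le\delta$ has only $b$ as a best response, so $\alpha'\in R_b$. By Assumption~\ref{assumption:game_sparse}(2), each such $\alpha'$ satisfies $\alpha'_b\ge 2\delta$.

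Now move mass $t$ off coordinate $b$ onto some other coordinate $j$. This stays in the simplex whenever $t\le\alpha_b$, and it moves $\alpha$ by exactly $\sqrt2\,t$ in $\ell_2$. Suppose $\alpha_b<2\delta+\delta/\sqrt2$. Taking $t=\min(\alpha_b,\delta/\sqrt2)$ produces a point within distance $\delta$ of $\alpha$ whose $b$-th coordinate is below $2\delta$, a contradiction. Hence $\alpha_b\ge 2\delta+\delta/\sqrt2\ge\sqrt2\,\delta=c\sqrt2/N$, which is even slightly stronger than needed. Condition~(1) only guarantees that the regions are nonempty, so the labelling by $b$ is meaningful. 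The assumption $c\le\tfrac12$ ensures $c\sqrt2/N<1$, so the Beta tail formula applies.

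Third, I convert this to a probability. The main obstacle is that the index $b$ with $\alpha\in R_b$ depends on $\alpha$. A naive union bound over $b$ would give $N(1-c\sqrt2/N)^{N-1}$, which is too weak. What I would try first is to sharpen the geometric step so that it yields an event about a fixed, $\alpha$-independent coordinate, or about an ordering of coordinates such as the $b$-th coordinate being large relative to the others. One route is to exploit that moving mass \emph{into} a coordinate $j\ne b$ with small $\alpha_j$ cannot create a point of $R_j$ unless $\alpha_j$ approaches $2\delta$. Another is to use the exchangeability of the Dirichlet law together with the structure imposed by Condition~(2) on all $N$ regions simultaneously, in order to charge the non-regional event to a single coordinate's Beta tail. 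If neither works, I would fall back on verifying the bound under the additional symmetry of the relabelling permutation mentioned after Assumption~\ref{assumption:game_sparse}. Once a single-coordinate event $\{\alpha_{j}\ge c\sqrt2/N\}$ contains the non-regional set, the Beta computation from the first step gives the stated inequality immediately.
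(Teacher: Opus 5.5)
Your Beta-tail computation and the implication in Step~2 are both correct. However, the whole content of the theorem lies in Step~3, which you leave as a list of candidate routes, and the reduction you set up cannot be completed.

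\textbf{Why your reduction cannot work.} The bound you extract, $\alpha_b\ge\sqrt2\,\delta_N$, already holds for \emph{every} $\alpha\in R_b$ by Condition~(2) of Assumption~\ref{assumption:game_sparse}. So the event $\{\alpha_{b(\alpha)}\ge c\sqrt2/N\}$ you reduce to has probability one. Even your sharper $\alpha_b\ge(2+1/\sqrt2)\delta_N$ yields only about $N e^{-(2+1/\sqrt2)c}$ after a union bound, which grows with $N$.

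Your preferred repair, containing the non-regional set in a fixed-coordinate event $\{\alpha_j\ge c\sqrt2/N\}$, is false in general. Let player $B$'s payoff matrix be the identity, so that $R_b=\{\alpha:\alpha_b=\max_k\alpha_k\}$. This satisfies the assumption for every $c\le\tfrac12$, since $\max_k\alpha_k\ge 1/N\ge 2\delta_N$. Every $\alpha$ with $\alpha_k>\tfrac12+\delta_N$ is non-regional, because any $\alpha'\in R_{b'}$ with $b'\neq k$ has $\alpha'_k\le\tfrac12$. Hence for each $j$ the non-regional set contains a positive-measure set of points with $\alpha_j<c\sqrt2/N$. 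Exchangeability alone only reproduces the union bound, and assuming extra symmetry would prove just a special case.

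\textbf{The missing idea.} What you need is a packing argument that uses non-regionality to prevent the regions' contributions from overlapping. This is how the paper proceeds, via the shrink-and-shift technique of Mirzakhani--Vondr\'ak. Write $\delta=\delta_N$, let $S=\bigcup_{b\neq b'}(R_b\cap R_{b'})$, and let $S_\delta$ be its closed $\delta$-neighbourhood. Then $\alpha$ is non-regional iff $\alpha\notin S_\delta$, so the non-regional set is the disjoint union of the pieces $\overline{R}_b=R_b\setminus S_\delta$.

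Shift each piece by $-\sqrt2\,\delta e_b$. Since $\alpha_b\ge 2\delta>\sqrt2\,\delta$, Condition~(2) puts every translate inside the \emph{single} shrunken simplex $(1-\sqrt2\,\delta)\Delta(\mathcal{A})$. The translates are pairwise disjoint: if $\alpha-\sqrt2\,\delta e_b=\beta-\sqrt2\,\delta e_{b'}$ with $b\neq b'$, then $\|\alpha-\beta\|_2=2\delta$. The midpoint lies in some $R_\ell$ with $\ell\neq b$ or $\ell\neq b'$. The segment from $\alpha$ (resp.\ $\beta$) to the midpoint then contains a point of $S$ within distance $\delta$, a contradiction. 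By translation invariance of $\mu_{N-1}$,
\[
\mu_{N-1}\bigl(\Delta(\mathcal{A})\setminus S_\delta\bigr)=\sum_{b}\mu_{N-1}\bigl(\overline{R}_b\bigr)\le(1-\sqrt2\,\delta)^{N-1}\,\mu_{N-1}\bigl(\Delta(\mathcal{A})\bigr).
\]
Your tail $(1-x)^{N-1}$ is exactly the relative volume of one shrunken simplex. The union bound loses the factor $N$ because it places each piece in its own copy $x e_b+(1-x)\Delta(\mathcal{A})$. The packing argument instead fits all pieces disjointly into one copy, and that disjointness is the one place where non-regionality genuinely enters.
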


Theorem~\ref{Thm:volatility} illustrates how volatility naturally arises in games. For games satisfying Assumption~\ref{assumption:game_sparse}, a uniformly sampled leader strategy is $\delta_N$-regional with probability at least $1 - \left(1 - \frac{c\sqrt{2}}{N}\right)^{N-1}$. This bound increases with $N$ and has the non-vanishing limit $1 - e^{-c\sqrt{2}}$, which is approximately $0.507$ at $c = \tfrac{1}{2}$. As the dimensionality of the game grows, small perturbations to the leader's strategy may therefore suffice to change the follower's response. This susceptibility to downstream disruption is precisely what motivates a decision-focused approach.
\begin{remark}
Theorem~\ref{Thm:volatility} bounds the prevalence of near-boundary strategies in strategy space; it makes no statement about utilities. Two steps connect it to utilities: a small change in the prediction can move the induced strategy across a best-response boundary, and where adjacent regions carry different leader utilities the realized utility then can change sharply. Games with such utility differences across adjacent regions are possible. 
\end{remark}

\section{End to End Predictions in Strategic Settings}
\label{sec: e2e}
% Given the non-monotonicity and potential volatility of the payoff landscape with respect to the accuracy of predictions, we move to consider a new paradigm for developing predictions within strategic environments. This paradigm is what we call ``\textbf{End-To-End}'' (ETE) prediction selection. As a warm-up, we illustrate how to explicitly incorporate downstream utility in the finite state action setting.
The results of Section~\ref{sec: limits} establish that no accuracy-based 
training objective can serve as a reliable proxy for strategic utility. This 
motivates a fundamentally different approach to learning in the SDFL setting, 
which we call \textbf{End-To-End} (ETE) prediction selection: rather than 
optimizing for predictive accuracy in isolation, ETE directly targets the 
downstream strategic objective. We present this framework as an initial 
algorithmic response to SDFL --- a proof of concept demonstrating tractability 
rather than a definitive solution, in the same spirit as early end-to-end 
approaches in single-agent DFL that preceded formal convergence 
analysis~\cite{donti2017task, wilder2019melding}.

Formally, let $f_\theta: \mathcal{X} 
\to \mathcal{S}$ be a parameterized model mapping observable features $x \in 
\mathcal{X}$ (derived from dataset $\mathcal{D}$) to a predicted state $f_\theta(x) 
\in \mathcal{S}$. The ETE problem is to find parameters $\theta^*$ that maximize 
the leader's realized utility at equilibrium as per \eqref{eq:sdfl_objective}. This objective is challenging to optimize directly: the true state, $s_r$, is unobserved at 
training time, and the mapping $\theta \mapsto U_i(\cdot)$ passes through an 
equilibrium computation that is in general non-differentiable in $\theta$.
% \[
% \theta^* = \arg\max_\theta\; \mathbb{E}_{x \sim \mathcal{D}}\!\left[ 
% U_i\!\left(\alpha^*\!\left(f_\theta(x)\right),\; s_r\right)\right],
% \]
% where $\alpha^*(s) = (\alpha_i^*(s),\, \alpha_{-i}^*(s))$ denotes the 
% equilibrium joint strategy profile when agents play assuming state $s$, 
% and $s_r$ is the fixed true state against which utility is realized.

In the Stackelberg setting, this non-differentiability is concrete: 
$\bar{\alpha}_i^{f_\theta(x)}$ is the solution to a linear program (LP) whose 
payoff matrices are determined by $f_\theta(x)$, and the LP argmax is 
non-smooth. Moreover, by Proposition~\ref{prop:non-monotone} and 
Theorem~\ref{Thm:volatility}, the objective is neither monotone in prediction 
accuracy nor smooth in $\theta$, so optimizing a supervised proxy is 
unreliable. These challenges motivate a two-stage development. We first 
consider the case where $\mathcal{S}$ is discrete and finite: here, utility 
can be evaluated exactly for any candidate prediction, which isolates the 
algorithmic question of how to search over predictions by strategic utility. 
We then turn to the general continuous setting, where both sources of 
difficulty are present.

\subsection{Warm-up: End to End State Prediction with Discrete, Finite State Spaces}

A core principle of ETE is that candidate predictions should be evaluated
by the strategic utility they induce, not solely by statistical accuracy.
A good ETE algorithm must therefore explore candidate predictions and filter
them by downstream performance. We begin by showing this explore-then-filter
principle is sound. In the discrete, finite-state setting, bandit-style
approaches such as successive elimination can enumerate candidate predictions
and evaluate each by its induced utility. We show that the cumulative cost
of exploring suboptimal predictions is sublinear in $T$, even when the
equilibrium induced by each candidate must itself be learned (provided
per-state equilibrium computation is also sublinear in $T$).

\begin{theorem}
     Consider a game $G$ with a finite set of exogenous states, and let $s_r$ denote the true realized state. Let $\alpha_{i,t}^{s}$ denote the strategy selected by player $i$ at time step $t$ given state prediction $s$ produced by algorithm $\mathcal{ALG}$, and let $(\bar{\alpha}_{i}^s, \bar{\alpha}_{-i}^s)$ be the unique equilibrium joint strategy profile given prediction state $s$. Suppose utilities are bounded, i.e.\ $U_i(\cdot, \cdot, s) \in [0, 1]$ for all $s$. If
    \[
        \underset{\mathcal{ALG}(s)}{\mathbb{E}}\!\left[\sum_{t=1}^T \bigl|U_i(\alpha_{i,t}^{s}, \alpha_{-i,t}^{s}, s') - U_i(\bar{\alpha}_{i}^{s}, \bar{\alpha}_{-i}^{s}, s')\bigr|\right] \in o(T) \quad \forall\, \text{fixed } (s, s'),
    \]
    then $\mathcal{ALG'}$, which combines $\mathcal{ALG}$ with successive elimination to determine an evaluation state, yields:
    \[
    \underset{\mathcal{ALG'}}{\mathbb{E}}\!\left[\sum_{t=1}^T \left|U_i(\alpha_{i,t}^{s_t}, 
    \alpha_{-i,t}^{s_t}, s_r) - \max_{s \in \mathcal{S}}\, 
    U_i(\bar{\alpha}_{i}^{s}, \bar{\alpha}_{-i}^{s}, s_r)\right|\right] \in o(T).
    \]
    \label{thm:no-regret}
\end{theorem}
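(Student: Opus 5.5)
The plan is to treat the finite state space $\mathcal{S}=\{s_1,\dots,s_K\}$ as the arm set of a $K$-armed bandit. The mean reward of arm $s$ is $\mu(s) := U_i(\bar{\alpha}_i^{s}, \bar{\alpha}_{-i}^{s}, s_r)$; this is what the leader earns if it plays the equilibrium computed for prediction $s$ while the world is $s_r$. Write $\mu^* = \max_s \mu(s)$ and $\Delta_s = \mu^* - \mu(s)$.

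$\mathcal{ALG}'$ runs successive elimination over these arms. It maintains an active set, and in each round it plays every active arm once, each through its own copy of $\mathcal{ALG}(s)$ (that is, the inner learner for prediction $s$, advanced only on the steps when $s$ is pulled). It then eliminates arms whose upper confidence bound falls below the best lower confidence bound. The confidence radii are inflated by a term accounting for the fact that observed rewards come from $\alpha_{i,t}^s$ rather than $\bar{\alpha}_i^s$. Since $\mathcal{S}$ is finite and fixed, the sum in the statement can be split across arms by the triangle inequality:
\[
\sum_{t=1}^T \bigl|U_i(\alpha_{i,t}^{s_t},\alpha_{-i,t}^{s_t},s_r)-\mu^*\bigr|
\le \sum_{s\in\mathcal{S}} \sum_{t:\,s_t=s} \Bigl( \bigl|U_i(\alpha_{i,t}^{s},\alpha_{-i,t}^{s},s_r)-\mu(s)\bigr| + \Delta_s \Bigr).
\]

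The first term, the inner learning error, is controlled by the hypothesis applied with the pair $(s,s')=(s,s_r)$. Arm $s$ is pulled $n_s(T)\le T$ times, and its copy of $\mathcal{ALG}(s)$ sees exactly those pulls, so this term is at most $g_s(n_s(T))$, where $g_s\in o(T)$ is the assumed bound. Because $g_s$ can be taken nondecreasing and $K$ is finite, $\sum_s g_s(T) \in o(T)$.

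The second term is the bandit regret $\sum_s n_s(T)\Delta_s$. To bound it I will use the averaged inner error: the empirical mean of arm $s$ after $n$ pulls differs from $\mu(s)$ by at most a Hoeffding term $\sqrt{\log(KT)/n}$ plus the bias $b_s(n) := g_s(n)/n \to 0$. Using radius $\sqrt{\log(KT)/n} + \bar b(n)$, where $\bar b(n) = \max_s b_s(n)$, a good event holds with probability $1-O(1/T)$. On this event the optimal arm is never eliminated, and a suboptimal arm $s$ survives only while $2(\sqrt{\log(KT)/n}+\bar b(n)) \gtrsim \Delta_s$. Since $\bar b(n)\to 0$ and $K$ is finite, each $n_s(T)$ with $\Delta_s>0$ is bounded by a finite threshold $N(\Delta_s) + O(\log(KT)/\Delta_s^2)$. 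This gives $\sum_s n_s(T)\Delta_s = O(\log T)+O(1) = o(T)$. The bad event contributes at most $T\cdot O(1/T)$ because utilities lie in $[0,1]$. Ties among optimal arms are harmless since they have $\Delta_s=0$.

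\textbf{Main obstacle.} The main obstacle is the non-stationarity and bias of the rewards. The observed utilities are generated by learning strategies, not by the fixed equilibrium, and the hypothesis only gives an $o(T)$ bound in expectation on cumulative error, not a high-probability per-step bound. Three issues need care:
\begin{itemize}
\item The bias radius must be known to the algorithm, so I will assume access to the rate $g_s$, or replace $\bar b$ with any slowly decaying schedule that dominates it.
\item The expected cumulative bound must be converted into control of the empirical means. I will use Markov's inequality on $\sum_t|\cdot|$ or argue directly in expectation, at the cost of a weaker but still $o(T)$ rate.
\item Independence of the Hoeffding noise must be justified. I will apply Azuma to the martingale differences of realized minus expected utilities given the history.
\end{itemize}
If the confidence-radius approach gets messy, the fallback is an explore-then-commit variant with exploration length $m(T)$ per arm, chosen so that $m(T)\to\infty$ and $Km(T)=o(T)$; this achieves $o(T)$ more crudely.
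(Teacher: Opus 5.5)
Your triangle-inequality decomposition and the inner-error term are fine (bound the latter by the full-horizon sum using nonnegativity, since $n_s(T)$ is random). The gap is at the concentration step of your successive-elimination route. With one persistent copy of $\mathcal{ALG}(s)$, the error $\hat\mu_n(s)-\mu(s)$ contains the drift $\frac1n\sum_{j\le n}\bigl(\mathbb{E}[U_j\mid\mathcal{F}_{j-1}]-\mu(s)\bigr)$. This drift is random and correlated across pulls, and the hypothesis controls only its expectation, so $b_s(n)=g_s(n)/n$ is not a deterministic bias. Azuma removes only the martingale part. Markov gives a failure probability of order $g_s(n)/(n\epsilon)$ at each $n$, which is neither $O(1/T)$ nor summable in $n$.

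The claimed $1-O(1/T)$ good event is in fact false. Let the inner learner, independently for each $k$ with probability $1/k$, play off-equilibrium (utility shifted by a constant) throughout $[2^k,2^{k+1})$. Then $g(n)=O(n/\log n)=o(n)$. Yet already at $n\asymp\log T$ the empirical mean leaves your interval $\sqrt{\log(KT)/n}+\bar b(n)$ with probability of order $1/\log\log T$, far above $O(1/T)$.

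The ingredient the paper uses is restarts. It runs epochs of length $T_k=k$, each with a fresh independent instance of $\mathcal{ALG}(s)$ for every surviving arm. Each epoch then yields one observation $\hat R^s_k\in[0,1]$, independent across epochs, whose mean lies within the deterministic bias $o(T_k)/T_k$ of $R^s$, so Hoeffding applies to epoch averages. No elimination can occur before $2\alpha_t\le 1$, i.e.\ $t\ge 4\ln T$, and by then the averaged bias is below $\Delta_{\min}/4$. So, unlike in your first bullet, the rate $g_s$ is never needed. Elimination costs $O(|\mathcal{S}|\ln^2 T)$ rounds, and play on surviving optimal arms is $o(T)$ by hypothesis.

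Alternatively, you can keep persistent copies. Only failure probability $o(1)$ is needed, since the bad event costs at most $T$. Apply Markov to $\frac1n\sum_{j\le n}|U_j-\mu(s)|$, which dominates $|\hat\mu_n(s)-\mu(s)|$, at $O(1)$ dyadic checkpoints covering the window $[4\ln(KT),\,C\ln T/\Delta_{\min}^2]$ where all eliminations occur. This suffices because a running average of nonnegative terms at any $n\in[m,2m]$ is at most twice its value at $2m$.

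Your explore-then-commit fallback is essentially complete by the same Markov reasoning: the wrong-commit probability is at most $\sum_s 2g_s(m)/(m\Delta_{\min})\to 0$. But it proves the bound for an explore-then-commit $\mathcal{ALG}'$, not for successive elimination as the statement specifies.
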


\subsection{End-to-End Prediction in General Settings}

The warm-up setting assumes a finite, discrete state space and allows repeated evaluations of the same state. These luxuries rarely hold in practice. In general applications, the state space is continuous or combinatorially large, and we are given a dataset from which we are to return a single model for state prediction. In this paradigm, we are therefore unable to directly apply bandit-style enumeration. We thus seek an approach that retains the core principle of evaluating candidate predictions via their downstream strategic utility within a single-pass learning framework.

The theory above presents two concrete challenges for optimizing in the SDFL domain. Firstly, Propositions~\ref{prop:non-monotone} and~\ref{prop:suboptimal-truth} establish that the mapping $\theta \mapsto U_i^{\text{realized}}$ is fundamentally misaligned with accuracy-based surrogates. This renders approaches such as gradient descent on prediction error alone insufficient. Secondly, Theorem~\ref{Thm:volatility} establishes that strategies close to a best-response boundary grow increasingly prevalent as game dimensionality increases; where adjacent regions carry different leader utilities, the utility landscape is discontinuous across the boundary. We therefore derive a heuristic tailored to the discontinuous optimization landscape induced by strategic responses in SDFL problems.

 Algorithm~\ref{Alg:ETE} addresses these challenges by combining three ideas. First, we perform \emph{intra-region optimization} within regions where the strategic response is fixed, using KKT-directed gradients that exploit differentiability within a fixed best-response zone to directly optimize through the follower's best response. Within these strategic epochs, the Huber regularisation weight $\lambda_e$ is annealed from $\lambda_{\max}$ to $\lambda_{\min}$, keeping prediction quality grounded in early KKT steps when the strategic gradient would otherwise dominate and destabilize training. Second, we perform \emph{inter-region exploration} to optimize across best-response regions via supervised updates, which shift the induced leader strategy across zone boundaries. A strategic checkpoint filter, motivated by the explore-then-filter principle of Theorem~\ref{thm:no-regret}, selects across candidates produced by both mechanisms to navigate the globally non-convex and discontinuous landscape.

\paragraph{KKT-directed gradients.}
Given a predicted state $s_{pred}$, the leader's optimal strategy $\bar{\alpha}_i$ solves a linear program whose KKT conditions form a differentiable system in $\bar{\alpha}_i$ and the payoff parameters. For a fixed follower pure action $a_{-i}$, implicit differentiation yields $\partial \bar{\alpha}_i / \partial s_{pred}$, enabling backpropagation of strategic utility to $\theta$. However, as Theorem~\ref{Thm:volatility} predicts, near best-response boundaries the follower's action switches discontinuously, and the KKT gradient—derived with $a_{-i}$ fixed—becomes unreliable. This motivates alternation: odd regimes apply supervised updates with Gaussian input perturbations to traverse $\Delta(\mathcal{A}_i)$ and escape boundary regions, while even regimes apply $\mathcal{U}_{\text{KKT}}$ where the gradient is well-defined.

\paragraph{Strategic checkpoint filter.}
At the close of every regime of $R$ epochs---whether a supervised or KKT regime---the current parameters $\theta$ are evaluated on a held-out set $\mathcal{D}_{\text{eval}}$ via the strategic utility function $\mathcal{U}$, and accepted only if they improve upon the best recorded utility $u^*$; otherwise training reverts to $\theta^*$. This mechanism is motivated in part by Theorem~\ref{thm:no-regret}: we identify model parameters by evaluating them for their downstream strategic utility.

\begin{remark}\label{rem:checkpoint}
The checkpoint filter selects a model whose expected strategic utility is within $\varepsilon$ of the best candidate: $\mathcal{U}(\theta^*) \geq \max_{1 \leq k \leq K} \mathcal{U}(\theta_k) - \varepsilon$ with probability at least $1 - \delta$, provided $|\mathcal{D}_{\text{eval}}| \geq 2\ln(2K/\delta)/\varepsilon^2$. This follows from standard arguments using Hoeffding's inequality and a union bound over the $K = \lceil T/R \rceil$ checkpoints.
\end{remark}

\begin{algorithm}[tb]
   \caption{Hybrid ETE: Alternating Predictive/Strategic Regimes}
   \label{Alg:ETE}
\begin{algorithmic}[1]
\small
   \STATE {\bfseries Input:} $\mathcal{D}$, $f_\theta$, $T$, $R$, $\rho$, $\sigma$, $\lambda_{\max}$, $\lambda_{\min}$, $E_{\text{anneal}}$
   \STATE $\mathcal{D}_{\text{grad}}, \mathcal{D}_{\text{eval}} \leftarrow \text{Split}(\mathcal{D}, \rho)$; \; $\theta^* \leftarrow \theta$; \; $u^* \leftarrow -\infty$; \; $e_{\text{kkt}} \leftarrow 0$
   \FOR{$r = 1$ {\bfseries to} $\lceil T/R \rceil$}
      \STATE $\theta \leftarrow \theta^*$
      \FOR{$e = 1$ {\bfseries to} $R$}
         \STATE \textbf{if} $r$ odd: $\theta \leftarrow \theta - \eta \nabla_\theta \mathcal{L}_{\text{pred}}(f_\theta(X_{\mathcal{B}} + \epsilon), Y_{\mathcal{B}})$, \; $\epsilon \sim \mathcal{N}(0,\sigma^2 I)$
         \STATE \textbf{else}: $\lambda_e \leftarrow \lambda_{\max} - (\lambda_{\max} {-} \lambda_{\min}) \cdot \min\!\left(\tfrac{e_{\text{kkt}}}{E_{\text{anneal}}}, 1\right)$; \; $e_{\text{kkt}} \mathrel{+}= 1$; \\
         \hspace{2.45em} $\theta \leftarrow \theta - \eta \nabla_\theta \bigl[-\mathcal{U}_{\text{KKT}}(f_\theta, \mathcal{B}) + \lambda_e \, \mathcal{L}_{\text{pred}}(f_\theta(X_{\mathcal{B}}), Y_{\mathcal{B}})\bigr]$
      \ENDFOR
      \STATE \textbf{if} $\mathcal{U}(f_\theta, \mathcal{D}_{\text{eval}}) > u^*$: \; $u^* \leftarrow \mathcal{U}(f_\theta, \mathcal{D}_{\text{eval}})$, \; $\theta^* \leftarrow \theta$
   \ENDFOR
   \STATE \textbf{return} $\theta^*$
\end{algorithmic}
\end{algorithm}

\subsection{Experimental Evaluation: Security Games}
Building on game-theoretic approaches to security
\cite{pita2008armor,fang2015securitygreen,krever2025guard}, we instantiate two
Stackelberg security games on an $N\!\times\!N$ grid $G$: a conservation game
using GPS tracks of 15 African elephants in Etosha National Park
\cite{getz2018etosha}, and an infrastructure game using ridership data from 123
Manhattan subway stations (Jan 2022--Dec 2024) \cite{mta2025ridership}. In both, a
defender commits to a mixed strategy $\alpha_i\in\Delta(G)$ over patrol
allocations, and an attacker who observes this commitment chooses
$\alpha_{-i}\in\Delta(G)$ \cite{yang2014adaptive}. The exogenous state
$s\in\mathbb{Z}_+^{N}$ captures the value at risk per cell: uncontested attacks
cost the defender utility proportional to the cell value $n$ (and reward the
attacker proportionally), successful interdiction rewards the defender
proportionally and penalizes the attacker by a fixed amount, and unvisited cells
contribute zero. Full details are in Appendix~\ref{sec:app-exp-details}.
% Building upon a large body of work that has studied the implications of game theoretic approaches within the security domain \cite{pita2008armor} \cite{fang2015securitygreen} \cite{krever2025guard}, we instantiate two versions of security games: a conservation security game (drawn from a study tracking 15 African elephants within Etosha National Park published on the Movebank repository \cite{getz2018etosha}) and an infrastructure security game (drawn from New York Metropolitan Transit Authority records \cite{mta2025ridership}, covering nine Manhattan stations from 1 January 2022 to 31 December 2024). In both instantiations, the primary agent is a defender distributing patrol resources over an $N \times N$ grid $G$, with strategy $\alpha_i \in \Delta(G)$; the attacker's strategy $\alpha_{-i} \in \Delta(G)$ similarly distributes attack resources over $G$. The exogenous state $s \in \mathbb{Z}_+^{N}$ is a vector whose entries quantify the value at risk in each cell. In each cell, if the attacker is uncontested the defender incurs negative utility proportional to the cell value $n$ while the attacker gains proportional utility; if the defender interdicts, the defender gains utility proportional to $n$ and the attacker suffers a fixed penalty; otherwise the cell contributes zero utility to either player. Consistent with previous work in this domain \cite{yang2014adaptive}, attackers are assumed to observe the defender's committed strategy, so the defender optimizes for the Stackelberg strategy. Details on both settings are in the appendix. 

\subsubsection{Methodology}

% In both the conservation and infrastructure security games, the defender must compute a Stackelberg strategy based on the current state $s$, which is unknown. The defender, instead, has access to historical state observations and must \emph{predict} the state at evaluation time in order to construct the payoff matrices from which the equilibrium strategy is derived.

We frame the problem as supervised regression: given temporal and historical
features for each cell in $G$, predict the scalar value (elephant count or
ridership) at a future period. All methods share a common base neural network architecture; comprehensive details on implementation, feature
engineering, and training are in Appendix~\ref{sec:app-exp-details}.

\paragraph{Results.}
We compare four methods. \textbf{PTO} minimizes predictive loss and feeds the
resulting prediction directly to the Stackelberg LP. The three ETE variants
differ in how strategic information enters training. \textbf{ZOC}
(Zeroth-Order Checkpoint) trains on supervised loss but applies a strategic
checkpoint filter after each regime, retaining only parameters that improve
utility on $\mathcal{D}_{\text{eval}}$. \textbf{KKT} differentiates through the
Stackelberg LP via KKT conditions, backpropagating strategic utility directly to
model parameters. \textbf{Hybrid} (Algorithm~\ref{Alg:ETE}) alternates
supervised and KKT regimes with the checkpoint filter applied after each.
We evaluate on two axes: Mean Absolute Error
$\mathrm{MAE}=\frac{1}{N|\mathcal{T}|}\sum_{t\in\mathcal{T}}\sum_{i\in[N]}
|\hat{y}^i_t - y^i_t|$ and downstream strategic utility across all evaluation
instances. Table~\ref{tab:results_500} reports both metrics at 500 epochs---the
budget at which all methods converged---across both game environments.
Utility is averaged over all evaluation instances (396 daily instances
for conservation, 1098 for infrastructure) and 5 independent training
runs; MAE is computed analogously. Figure~\ref{fig:utility_over_time} shows the same utilities month by month over each test period.

\begin{figure}[!ht]
    \centering
    \subcaptionbox{Conservation Security Game (Etosha), test period February 2013 to March 2014.\label{fig:utility_over_time_gsg}}[0.48\textwidth]{%
        \includegraphics[width=\linewidth]{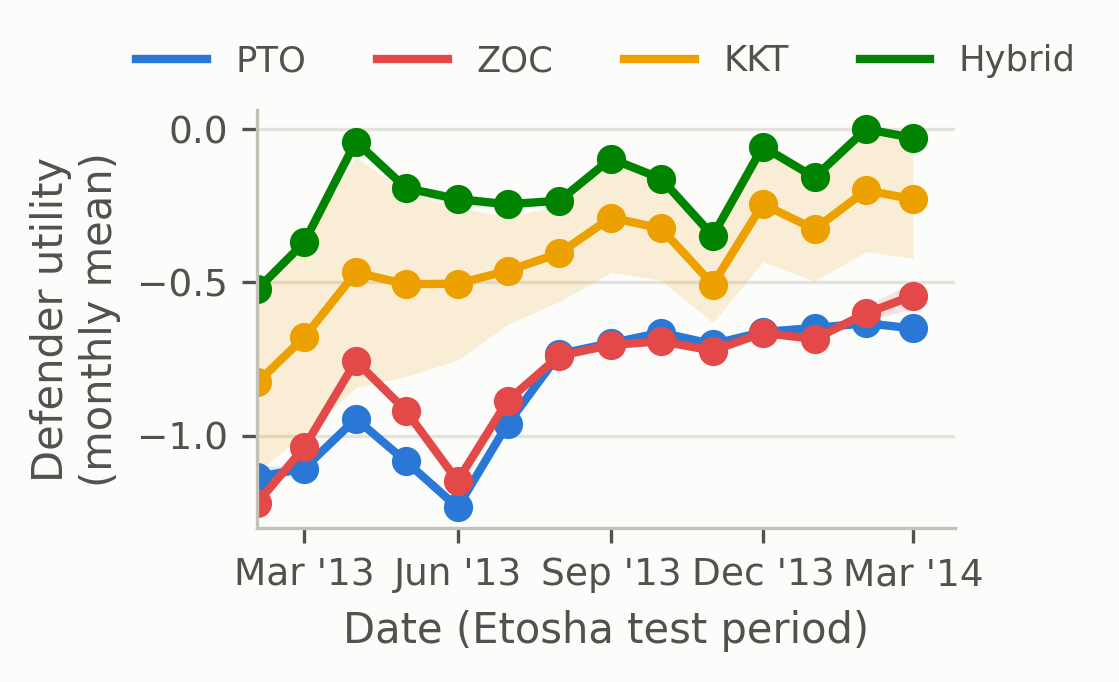}%
    }\hfill
    \subcaptionbox{Infrastructure Security Game (Manhattan subway), test period January to December 2024.\label{fig:utility_over_time_mta}}[0.48\textwidth]{%
        \includegraphics[width=\linewidth]{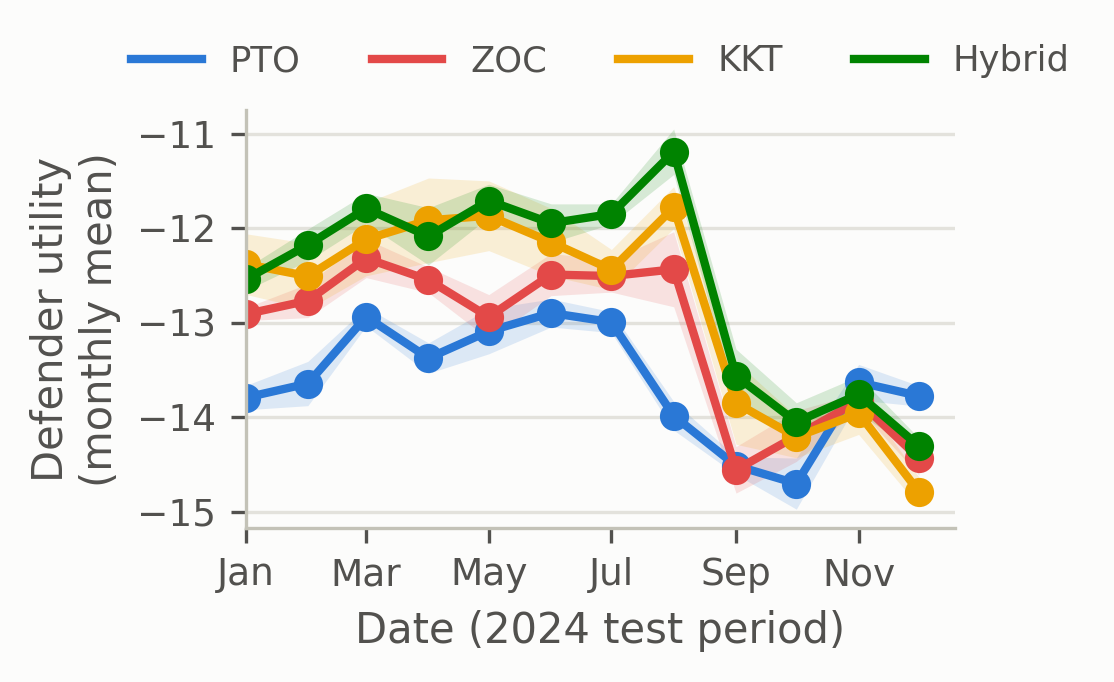}%
    }
    \caption{Defender utility over the test period for the four methods. Each point is the mean defender utility over the evaluation instances in one calendar month, averaged over the five training runs; the shaded band is one standard error across runs. Higher is better. Averaging each curve over all evaluation instances gives the utilities in Table~\ref{tab:results_500}.}
    \label{fig:utility_over_time}
\end{figure}

\begin{table}[ht]
\centering
\caption{ETE vs.\ PTO at 500 training epochs (mean $\pm$ std across 5 runs).}
\label{tab:results_500}
\begin{tabular}{llcccc}
\toprule
& & \multicolumn{2}{c}{\textbf{Conservation Security Game}} & \multicolumn{2}{c}{\textbf{Infrastructure Security Game}} \\
\cmidrule(lr){3-4} \cmidrule(lr){5-6}
& \textbf{Method} & \textbf{Avg.\ Utility} $(\uparrow)$ & \textbf{MAE} $(\downarrow)$ & \textbf{Avg.\ Utility} $(\uparrow)$ & \textbf{MAE} $(\downarrow)$ \\
\midrule
\multirow{3}{*}{\text{ETE}}
  & ZOC    & $-0.785 \pm 0.009$ & $0.118 \pm 0.006$ & $-13.160 \pm 0.355$ & $0.559 \pm 0.062$ \\
  & KKT    & $-0.404 \pm 0.483$ & $0.433 \pm 0.159$ & $-12.829 \pm 0.465$ & $1.104 \pm 0.467$ \\
  & Hybrid & $\mathbf{-0.173 \pm 0.013}$ & $0.458 \pm 0.107$ & $\mathbf{-12.578 \pm 0.383}$ & $0.736 \pm 0.028$ \\
\midrule
& PTO    & $-0.831 \pm 0.005$ & $\mathbf{0.110 \pm 0.005}$ & $-13.611 \pm 0.210$ & $\mathbf{0.409 \pm 0.009}$ \\
\bottomrule
\end{tabular}
\end{table}
% \begin{table}[ht]
% \centering
% \caption{ETE vs.\ PTO at 500 training epochs (mean $\pm$ std across 5 runs).}
% \label{tab:results_500}
% \begin{tabular}{ll@{\hspace{4pt}}lcc}
% \toprule
% \textbf{Game} & & \textbf{Method} & \textbf{Avg.\ Utility} $(\uparrow)$ & \textbf{MAE} $(\downarrow)$ \\
% \midrule
% \multirow{4}{*}{\shortstack[l]{Conservation\\Security Game}}
%   & \multirow{3}{*}{\text{ETE}} & ZOC    & $-0.785 \pm 0.008$ & $0.118 \pm 0.005$ \\
%   &                               & KKT    & $-0.404 \pm 0.432$ & $0.433 \pm 0.142$ \\
%   &                               & Hybrid & $\mathbf{-0.173 \pm 0.011}$ & $0.458 \pm 0.096$ \\
% \cmidrule(l){2-5}
%   &                               & PTO    & $-0.831 \pm 0.004$ & $\mathbf{0.111 \pm 0.005}$ \\
% \midrule
% \multirow{4}{*}{\shortstack[l]{Infrastructure\\Security Game}}
%   & \multirow{3}{*}{\text{ETE}} & ZOC    & $-11.984 \pm 0.183$ & $2.064 \pm 0.231$ \\
%   &                               & KKT    & $-11.514 \pm 0.149$ & $14.302 \pm 1.044$ \\
%   &                               & Hybrid & $\mathbf{-11.479 \pm 0.198}$ & $6.193 \pm 0.787$ \\
% \cmidrule(l){2-5}
%   &                               & PTO    & $-13.564 \pm 0.281$ & $\mathbf{1.368 \pm 0.021}$ \\
% \bottomrule
% \end{tabular}
% \end{table}

PTO, on average, achieves lower MAE than all ETE approaches, indicating superior predictive accuracy in the conventional supervised learning sense. However, this advantage does not translate into better strategic outcomes: all three ETE methods achieve higher defender utility than PTO in both domains. Critically, Hybrid achieves the best strategic utility across both game environments --- outperforming not only PTO but also ZOC and KKT in isolation --- demonstrating that combining supervised exploration, KKT-directed gradients, and the strategic checkpoint filter yields consistent gains that neither component delivers on its own.

\begin{figure}[H]
    \centering
    \subcaptionbox{Conservation Security Game: Elephant Distribution Predictions\label{fig:gsg_e2ewins}}{%
        \includegraphics[width=0.85\textwidth]{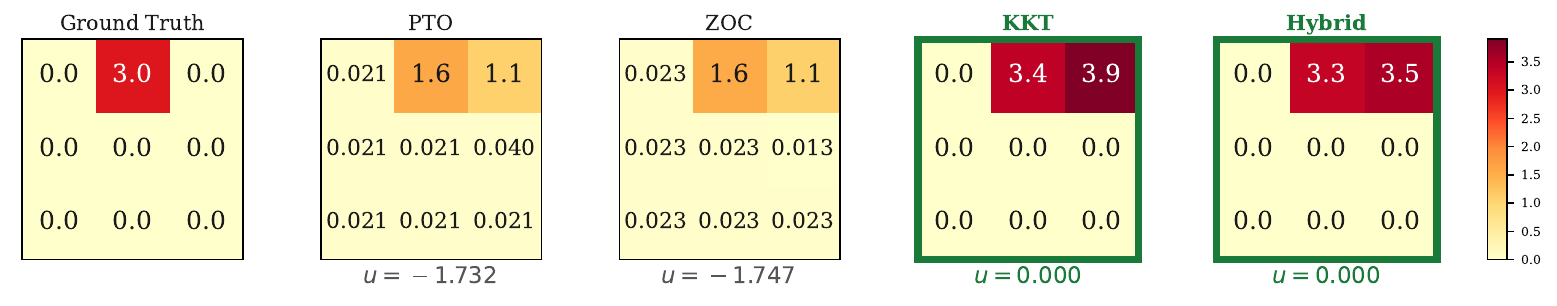}%
    }\\[0.1em]
    \subcaptionbox{Infrastructure Security Game: Patrol Strategies across top 9 stations\label{fig:mta_ptowins}}{%
        \includegraphics[width=0.85\textwidth]{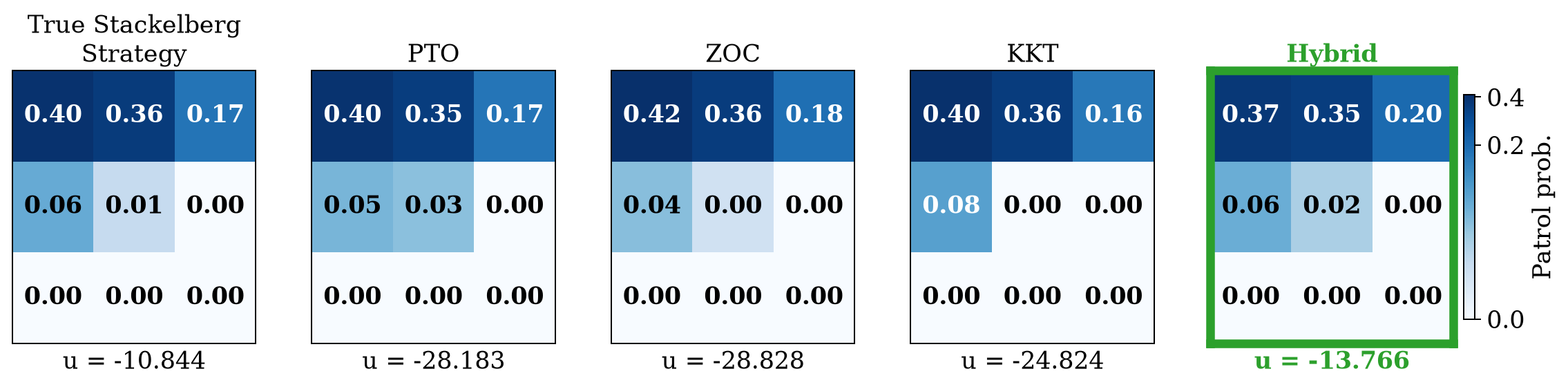}%
    }\\[0.1em]
    \subcaptionbox{Infrastructure Security Game: Ridership Predictions across 123 stations\label{fig:synthetic_hybrid}}{%  % fixed duplicate label + "wins wins"
        \includegraphics[width=0.85\textwidth]{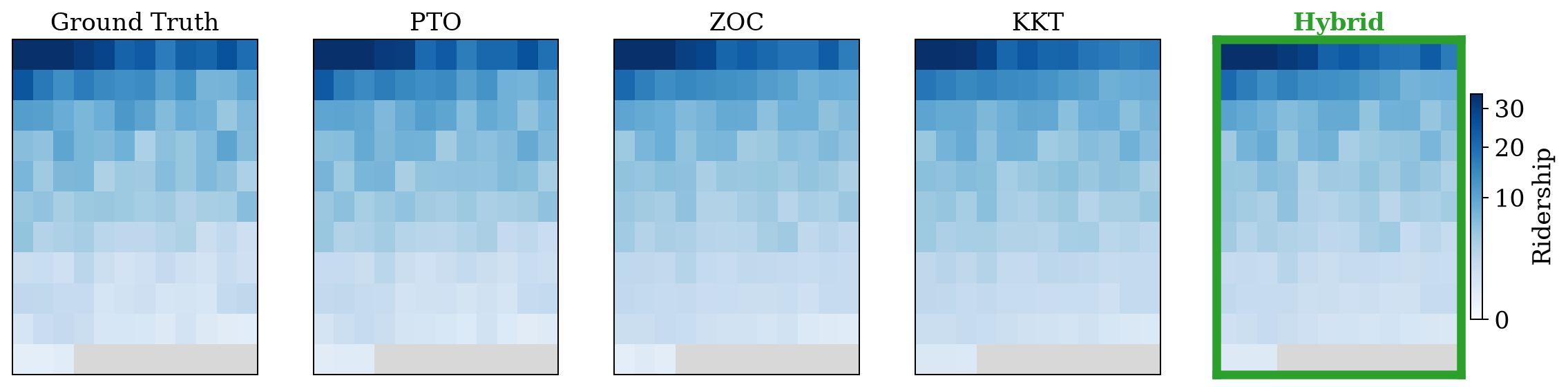}%
    }
    \caption{Instance-level state predictions for select evaluation periods.
    Winning method(s) shown in \textcolor{green!50!black}{\textbf{bold green}};
    utility $u$ reported below each prediction grid.}
    \label{fig:heatmap_comparison}
\end{figure}

Figure~\ref{fig:heatmap_comparison} illustrates the instance-level dynamics
underlying these aggregate results. The utility landscape is complex and
volatile---consistent with Theorem~\ref{Thm:volatility}---and no single method
universally dominates at the individual instance level. In instances where ETE
methods outperform PTO, the ETE predictions better capture strategically
relevant structure in the state, even at the cost of raw predictive accuracy. We illustrate such instances in the above figure. To probe the general landscape and gain an understanding of how in some cases we may see variation with PTO approaches, we include a  100 action synthetic setting on a grid with multiple modes in Appendix~\ref{sec:app-toy-exp}: here we illustrate how
different state representations produce distinct downstream strategic
consequences.

\FloatBarrier % all figures and tables of Section 4 are placed before the conclusion
\section{Conclusion}
\label{sec: conclusion}
In this work, we formalized \emph{strategic decision-focused learning} as a distinct problem
within the decision-focused learning paradigm, targeting settings where ML
predictions are consumed by strategically interacting agents. We showed that
prediction accuracy and strategic payoff can be non-monotonic, motivating a
departure from predict-then-optimize pipelines, and proposed an end-to-end
framework that incorporates downstream strategic utility directly into model
selection via KKT-directed gradients and strategic evaluation on a held-out
partition.
% In this work, we formalized \emph{strategic decision-focused learning} as a distinct problem within the broader decision-focused learning paradigm, addressing settings in which ML predictions are consumed by agents engaged in strategic interactions. We established theoretically that the relationship between prediction accuracy and strategic payoff can be non-monotonic, motivating a departure from the standard predict-then-optimize pipeline. To this end, we proposed an end-to-end framework that incorporates downstream strategic utility directly into model selection, coupling KKT-directed gradients with strategic evaluation on a held-out partition. 
% Experiments across two security game domains---wildlife conservation using elephant tracking data from Etosha National Park and infrastructure protection using MTA subway ridership data---confirmed that end-to-end training achieves superior defender utility despite higher supervised prediction error, corroborating our theoretical finding that accuracy and strategic utility can be misaligned.

Several important directions remain open. Our formulation assumes followers observe the true state; modeling settings in which followers must also learn the state from data introduces a second layer of strategic complexity worth exploring. More broadly, precisely characterizing the conditions under which equilibrium-aware end-to-end approaches outperform predict-then-optimize, and developing scalable algorithms for this regime, are promising avenues for future work.

\clearpage % flush the pending floats (Figures 2 and 3) before the references
\bibliographystyle{icml2026}  
\bibliography{NeurIPS}

%%%%%%%%%%%%%%%%%%%%%%%%%%%%%%%%%%%%%%%%%%%%%%%%%%%%%%%%%%%%
\newpage
\appendix
\section*{Appendix}
\label{appendix}
\section*{Limitations and Broader Impact}

\paragraph{Limitations.}
Our formulation assumes that followers observe the true state of the
world; in practice, followers may also need to learn the state from data,
introducing a second layer of strategic complexity that our framework does
not capture. We provide a local non-monotonicity result motivating
end-to-end training, but a precise characterization of when
equilibrium-aware approaches outperform predict-then-optimize remains
open. 

\paragraph{Broader impact.}
Strategic decision-focused learning is motivated by settings where a
decision-maker deploys ML predictions against agents who respond
strategically — a structure common in security, infrastructure allocation,
and market design. Improved patrol allocation in conservation and transit
safety are positive applications, but the same framework could in
principle be used in ways that
raise fairness and civil-liberties concerns.  

\section{Proofs of Theoretical Results}

\begin{proposition*}{~\ref{prop:non-monotone}}
 There exists a game instance with true state $s_r$, and predictions $s_{p,1}$ and $s_{p,2}$, wherein $s_{p,1}$ is strictly more accurate than $s_{p,2}$ under every accuracy metric $\phi$, yet player $i$'s utility at the equilibrium induced by $s_{p,1}$ is strictly lower than that induced by $s_{p,2}$:
\[
U_i\!\left(\alpha^*(s_{p,1}),\, s_r\right)
\;<\;
U_i\!\left(\alpha^*(s_{p,2}),\, s_r\right),
\]
where $\alpha^*(s) = (\alpha_i^*(s),\, BR_{-i}(\alpha_i^*(s), s_r))$ denotes the equilibrium joint strategy profile when the primary plays assuming state $s$, the rest of the agents see the true state $s_r$, and utility is evaluated at $s_r$.

\end{proposition*}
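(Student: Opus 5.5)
The plan is to separate the two requirements in the statement: the accuracy requirement will come from a purely geometric choice of the two predictions, and the utility reversal will be engineered into the game's payoffs.

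For the accuracy part, I will pick $s_{p,2} \neq s_r$ and place $s_{p,1}$ strictly between them on the segment, $s_{p,1} = s_r + t\,(s_{p,2}-s_r)$ with $t \in (0,1)$. Then for every norm $\|\cdot\|$, and in particular every symmetric one in Definition~\ref{def:accuracy_metric}, homogeneity gives
\[
\phi(s_{p,1}) = \|s_{p,1}-s_r\| = t\,\|s_{p,2}-s_r\| < \|s_{p,2}-s_r\| = \phi(s_{p,2}).
\]
So "strictly more accurate under every accuracy metric" holds automatically. It then suffices to work in a one-dimensional state space, say $\mathcal{S}=\mathbb{R}$, with $s_r = 0$, $s_{p,1}=1$ and $s_{p,2}=2$.

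For the utility part, I will build a small Stackelberg game. The leader $i$ has three actions $\{A,B,C\}$ and the follower $-i$ has two actions. I make the follower's payoff independent of $s$, so that $BR_{-i}(\alpha_i,\hat s)=BR_{-i}(\alpha_i,s_r)$, and I choose it so that the follower's response is a fixed pure action $b_0$ that the leader's choice among $A,B,C$ cannot change; this can be done with a strict preference, so no tie-breaking issue arises. The leader's state-dependent payoffs $u(a,s)=U_i(a,b_0,s)$ are chosen so that:
\begin{itemize}
\item at $s=1$, the unique maximizer is $B$;
\item at $s=2$, the unique maximizer is $C$;
\item at the true state, $u(B,0) < u(C,0)$.
\end{itemize}
For instance, take $u(A,s)=1-|s|$, $u(B,s)=\mathbb{1}[s=1]$ and $u(C,s)=\tfrac12+\mathbb{1}[s=2]$, suitably scaled. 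Then
\[
u(B,0)=0 \;<\; u(C,0)=\tfrac12,
\]
whereas at $s=1$ action $B$ scores $1$ against $\tfrac12$ for $C$ and $0$ for $A$, and at $s=2$ action $C$ scores $\tfrac32$ against $0$ for $B$ and $-1$ for $A$.

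The steps are then as follows. First, verify that the Stackelberg commitment $\alpha_i^*(s)$ over the leader's simplex is the pure action claimed at each of $s=1$ and $s=2$. This holds because the follower's response is constant, so the leader's objective is linear in $\alpha_i$ and is maximized at the unique best vertex. Second, evaluate the realized utilities $U_i(\alpha^*(s_{p,1}),s_r)=u(B,0)$ and $U_i(\alpha^*(s_{p,2}),s_r)=u(C,0)$, and compare them.

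I expect the main obstacle to be that the statement is meant to be genuinely strategic, and a reader may object to payoffs that are discontinuous in $s$. If payoffs must be affine in $s$, three affine leader lines cannot realize the required ordering; the slopes and intercepts conflict under the upper-envelope argument. In that case I would instead let the state enter the follower's payoff. The mechanism is that the prediction $s_{p,1}$ shifts the leader's mixed commitment just across a best-response boundary, so that the follower, who responds under $s_r$, switches to an action harmful to the leader. The farther prediction $s_{p,2}$ overshoots into a region that induces a favorable response. This is exactly the boundary-crossing effect discussed after Theorem~\ref{Thm:volatility}. I would try a $2\times 2$ security game first, with the cell values given by $s$, and tune the numbers by hand.
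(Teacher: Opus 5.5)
Your proof is correct, and it reaches the statement by a different and much more elementary route than the paper.

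\textbf{Accuracy.} Placing $s_{p,1}$ on the open segment from $s_r$ to $s_{p,2}$ gives the strict ordering directly from homogeneity. This holds for every norm, not only symmetric ones. The paper instead uses non-collinear states in $\mathbb{R}^9$: the true state is $s_0$, and the predictions are $s_1=\mathbf{1}$ and $s_2=9e_5$. It therefore needs a majorization argument. Every top-$k$ partial sum of the sorted error vector $(2,1,1,0,\dots,0)$ is at most half the corresponding partial sum of $(6,1,1,1,1,1,1,0,0)$, and Ky Fan dominance then gives $\|s_1-s_0\|\le\frac12\|s_2-s_0\|$ for all symmetric norms.

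\textbf{Utility reversal.} Your follower has a strictly dominant action, so your instance is really a single-agent decision problem. The reversal comes entirely from leader payoffs that are non-affine in $s$. The paper's instance is a $9$-cell security game with payoffs linear in $s$, and its reversal runs through the follower's best response under $s_r$. The uniform commitment induced by $s_1$ leaves the center cell (worth $3$ under $s_0$) as the attacker's target, which gives $-7/3$. Any commitment with $\alpha_5>\frac12$ induced by $s_2$ pushes the attacker to a unit-value cell, which gives at least $-1$.

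Your remark about the affine case is also correct. With affine state dependence, a fixed follower response and collinear predictions, no strict reversal is possible, because the payoff difference between the two commitments is affine along the ray. This is exactly why your collinear shortcut forces you into either non-affine payoffs or a state-dependent response.

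\textbf{Trade-off.} Your route gives a short, fully verified proof with a stronger accuracy conclusion, but it gives up strategic content. It shows that the proposition as literally stated is already witnessed by a game with a dummy follower. So the statement alone does not support the paper's framing that this misalignment is specific to strategic settings. The paper's example, and the fallback you sketch, instead exhibit the best-response boundary-crossing mechanism that the section is meant to illustrate.
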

\begin{proof}[Proof of Proposition ~\ref{prop:non-monotone}]
We consider a Stackelberg game between two players, $A$ and $B$, in which player $A$ leads. Let $\mathcal{S} :=$
\[\left\{
s_0 = \begin{bmatrix} 1 \\ 0 \\ 1 \\ 1 \\ 3 \\ 1 \\ 1 \\ 0 \\ 1 \end{bmatrix}, \quad
s_1 =\begin{bmatrix} 1 \\ 1 \\ 1 \\ 1 \\ 1 \\ 1 \\ 1 \\ 1 \\ 1 \end{bmatrix}, \quad
s_2 = \begin{bmatrix} 0 \\ 0 \\ 0 \\ 0 \\ 9 \\ 0 \\ 0 \\ 0 \\ 0 \end{bmatrix}
\right\}\]

Let $\mathcal{A} = \mathcal{B} =  \{x \in \mathbb{R}^9: \sum\limits_{i = 1}^9 x_i =1, x_i \geq 0 \; \forall i \}$. We define $\alpha \in \mathcal{A}$ and $\beta \in \mathcal{B}$ to be strategies by players $A$ and $B$ respectively. We denote the utility functions by the following: 
\[
 U_A(\alpha, \beta, s) = \sum\limits_{i = 1}^9 s_i \beta_i (2\alpha_i - 1);\quad U_B(\alpha, \beta, s) = \sum\limits_{i = 1}^9 s_i \beta_i (1- 2\alpha_i)
\]

Given a state $s \in \mathcal{S}$, player $B$'s payoff is linear in $\beta$, so their best response is always a vertex of $\mathcal{B}$, i.e., a pure strategy $e_k$ for some $k \in \{1, \ldots, 9\}$.

Consider the following set of linear programs defined for each possible best response action of player $B$:
\begin{align*}
\alpha^{(s,k)} = \argmax\limits_{\alpha \in \mathcal{A}} \quad & s_k (2\alpha_k - 1) \\
\text{s.t.} \quad & s_k(1 - 2\alpha_k) \geq s_j(1 - 2\alpha_j) \quad \forall \; j \neq k
\end{align*}
The constraint ensures that action $k$ is indeed player $B$'s best response under $\alpha$. Note that player $A$'s Stackelberg strategy given state $s$ is obtained by solving the linear programs and selecting the linear program with the highest value:
\[
\alpha^s = \argmax\limits_{k} \left\{ U_A(\alpha^{(s, k)}, e_k, s) : k \in \{1, \ldots, 9\} \right\}
\]
with Stackelberg value:
\[
 V^*(s) = \max\limits_{k} \left\{ U_A(\alpha^{(s, k)}, e_k, s) : k \in \{1, \ldots, 9\} \right\}
\]
Solving the above optimization problems for Stackelberg strategies given predictions of $s_1$ and $s_2$ yields:
$\alpha^{s_1} =\begin{bmatrix} \frac{1}{9} \\ \frac{1}{9} \\ \frac{1}{9} \\ \frac{1}{9} \\ \frac{1}{9} \\ \frac{1}{9} \\ \frac{1}{9} \\ \frac{1}{9} \\ \frac{1}{9} \end{bmatrix}$ and for the state $s_2$ we have a set of possible strategies, i.e., $\alpha^{s_2} \in A^{s_2} =\{\alpha \in \mathcal{A} : \alpha_5 > \frac{1}{2}\}$.

Let $s_0$ be the real state and $s_1$ and $s_2$ be predictions. We calculate the utility of the Stackelberg strategies under the state $s_0$ and find that:

\paragraph{For Strategy $\alpha^{s_1}$:} Recall $\alpha^{s_1} = \frac{1}{9}\mathbf{1}$. Under $s_0$, player $B$'s payoff 
from playing action $i$ is $s_{0,i}(1-2\alpha^{s_1}_i) = \frac{7}{9}s_{0,i}$, which is maximized at $i=5$ 
(where $s_{0,5}=3$). Hence the best response is $e_5$, giving player $A$ the utility:
\[
U_A(\alpha^{s_1}, e_5, s_0) = 3\!\left(\frac{2}{9} - 1\right) = -\frac{7}{3}.
\]

\paragraph{Worst-case strategy in $A^{s_2}$.} Recall $A^{s_2} = \{\alpha \in \Delta(\mathcal{A}) : \alpha_5 > \frac{1}{2}\}$. For any $\alpha \in A^{s_2}$, we assess player $B$'s best response under $s_0$. Actions $i \in \{2, 8\}$ satisfy $s_{0,i} = 0$, so $U_B(e_i, \alpha, s_0) = 0$ regardless of $\alpha$. Action $i = 5$ yields $U_B(e_5, \alpha, s_0) = 3(1 - 2\alpha_5) < 0$ since $\alpha_5 > \frac{1}{2}$. For the remaining actions $i \in \{1, 3, 4, 6, 7, 9\}$, note that $\sum_{i \neq 5} \alpha_i < \frac{1}{2}$, so at least one such action corresponds to $\alpha_i < \frac{1}{2}$, giving $U_B(\alpha, e_i, s_0) = 1 - 2\alpha_i > 0$. Since these actions offer strictly positive utility to player $B$, player $B$'s best response lies in $\{e_k : k \in \{1,3,4,6,7,9\}\}$ for any $\alpha \in A^{s_2}$ with Player $A$'s resulting utility as $U_A = 2\alpha_k - 1$. With this we note that
\[
\inf_{\alpha \in A^{s_2}} U_A(\alpha, \beta^*(\alpha), s_0) = -1.
\]
\paragraph{Comparison.} Even in the worst case over $A^{s_2}$, player $A$'s utility is strictly greater 
than under $\alpha^{s_1}$:
\[
\inf_{\alpha \in A^{s_2}} U_A(\alpha, \beta^*(\alpha), s_0) = -1 \;>\; -\frac{7}{3} = U_A(\alpha^{s_1}, e_5, s_0).
\]

The sorted error vectors, i.e.\ the absolute errors $|s_1 - s_0|$ and $|s_2 - s_0|$ arranged in decreasing order, are $(2,1,1,0,0,0,0,0,0)$ and $(6,1,1,1,1,1,1,0,0)$, so every top-$k$ partial sum of the former is at most half the latter's, and thus by \cite[Ch. IV]{bhatia1997matrix}:
\[
\|s_1 - s_0\| \;\leq\; \tfrac{1}{2}\,\|s_2 - s_0\| \;<\; \|s_2 - s_0\|
\]
for every symmetric norm simultaneously. Therefore, despite $s_2$ being a less accurate prediction of $s_0$ under every accuracy metric of Definition~\ref{def:accuracy_metric} (for instance $\|s_2 - s_0\|_2 = \sqrt{42}$ versus $\|s_1 - s_0\|_2 = \sqrt{6}$), the Stackelberg strategy consistent with the prediction $s_2$ outperforms the Stackelberg strategy induced by the more accurate prediction $s_1$. Strict preference extends to any separable loss $\sum_i g(|e_i|)$, where $e_i$ denotes the $i$-th coordinate of the error $s - s_0$, with $g$ strictly increasing (e.g., Huber, MSE).

%%%%%%%%%%%%%%%%%%%%%%%%%%%%%%%%%%%%%%%

\end{proof}

\begin{proposition*}{~\ref{prop:suboptimal-truth}}
There exists a game instance with true state $s_r$ and player $i$ such that there exists a prediction $s_p \in \mathcal{S}$ satisfying:
\[
U_i\!\left(\alpha^*(s_p),\, s_r\right) \;>\; U_i\!\left(\alpha^*(s_r),\, s_r\right),
\] where $\alpha^*(s) = (\alpha_i^*(s),\, BR_{-i}(\alpha_i^*(s), s_r))$ denotes the equilibrium joint strategy profile when the primary plays assuming state $s$, the rest of the agents see the true state $s_r$, and utility is evaluated at $s_r$.
\end{proposition*}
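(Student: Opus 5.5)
The plan is to exploit the one place where a correct prediction can be beaten: \emph{tie-breaking at best-response boundaries}. At $s_r$ the leader's Stackelberg commitment typically sits exactly on the boundary between two best-response regions $R^A_b$ and $R^A_{b'}$ (Definition~\ref{def:game_seq}). It is optimal only under the convention that the follower resolves the indifference in the leader's favour. As in the proof of Proposition~\ref{prop:non-monotone}, $BR_{-i}(\alpha_i,s_r)$ is a set, and we evaluate the realized profile against a follower response in it that need not favour the leader. I would therefore read $\alpha^*(s)$ with a pessimistic, or at least non-leader-favouring, selection from $BR_{-i}(\alpha_i^*(s),s_r)$. This is consistent with the worst-case treatment over $A^{s_2}$ used earlier.

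The construction is a small general-sum (not zero-sum) Stackelberg game. With zero-sum payoffs, as in the security game of Proposition~\ref{prop:non-monotone}, the leader's value is continuous and no such gap appears. I would take two leader actions and two follower actions, with $\mathcal{S}\subseteq\mathbb{R}$ a scalar state $s$ that shifts the follower's indifference threshold. For example, the follower prefers action $L$ iff $\alpha_U \ge s$, and the leader strictly prefers $(\cdot,L)$ but also prefers to place as little mass on $U$ as possible. A concrete choice is $U_A(\alpha,L)=2-\alpha_U$ and $U_A(\alpha,R)=0$.

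The steps are as follows.
\begin{enumerate}
\item Solve the Stackelberg LP at $s_r$. The optimum is $\alpha_U=s_r$, exactly on the boundary, with value $2-s_r$ under favourable tie-breaking.
\item Compute the realized utility when the follower, seeing $s_r$, is indifferent there and plays $R$ (or mixes). This gives realized utility $0$, or at most a convex combination strictly below $2-s_r$.
\item Take a deliberately biased prediction $s_p=s_r+\varepsilon$. Its Stackelberg commitment $\alpha_U=s_r+\varepsilon$ lies strictly inside the region where the true-state follower strictly prefers $L$, so $BR_{-i}$ is the singleton $\{L\}$.
\item Compare the two. The realized utility at $s_p$ is $2-s_r-\varepsilon$, which exceeds $0$ for small $\varepsilon$ and gives the strict inequality.
\end{enumerate}

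I expect the main obstacle to be making the comparison airtight against the objection that, under strong Stackelberg tie-breaking, predicting $s_r$ is trivially optimal, since the leader then plays its true optimal commitment. I would address this in the game definition itself. I would state explicitly that the follower's selection from its best-response set is not leader-favourable, as in the worst-case analysis over $A^{s_2}$. Alternatively, I would make the follower's indifference at the boundary yield the leader's bad outcome through a fixed deterministic tie-breaking rule, such as lowest index, with the actions labelled so that $R$ wins ties. If time permits, I would also embed the example in the nine-cell security game to match the paper's setting. There, $s_p$ overweights the cell the attacker targets at the boundary, pushing the defender strictly into the favourable region. The 2×2 instance already suffices for the existence claim.
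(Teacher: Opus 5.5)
You compute the $2\times 2$ example correctly, but the strict inequality comes entirely from using two different follower-response rules at the same state $s_r$, and that is a genuine gap. The commitment $\alpha_U=s_r$ is computed as a strong Stackelberg strategy, where the follower breaks the tie toward $L$. The realized response to that same commitment, under the same state $s_r$, breaks the tie toward $R$. So $\alpha^*(s_r)$ is not an equilibrium profile of the true game, yet the statement defines $\alpha^*(s)$ as the equilibrium joint profile, and a perfect prediction should reproduce it.

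Neither of your fixes closes this. Suppose the leader plans against the same selection $\beta(\cdot,s_r)\in BR_{-i}(\cdot,s_r)$ that is realized, and set $W(\alpha_i):=U_i(\alpha_i,\beta(\alpha_i,s_r),s_r)$. Then for every $s_p$,
\[
U_i(\alpha^*(s_p),s_r)=W(\alpha_i^*(s_p))\le \max_{\alpha_i}W(\alpha_i)=U_i(\alpha^*(s_r),s_r),
\]
so the truth is always weakly optimal. Under the consistent rule ``$R$ wins ties'', your leader's problem at $s_r$ is $\sup_{\alpha_U>s_r}(2-\alpha_U)$. This has no maximizer, so $\alpha_i^*(s_r)$ does not exist. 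The worst case over $A^{s_2}$ in Proposition~\ref{prop:non-monotone} does not license your move either. It is a worst case over the leader's own argmax set, not a switch of the follower's tie-breaking between planning and realization. In short, no self-consistent Stackelberg instance can witness this proposition, because at $s_r$ the committed leader already maximizes exactly the realized objective.

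The missing idea is to use a solution concept in which the primary agent cannot commit, so that a false state acts as a commitment device. That is the paper's route. It takes a strongly monotone, concave two-player game whose unique interior Nash equilibrium $(\alpha_i^\star,\alpha_{-i}^\star)$ does not maximize $\alpha_i\mapsto U_i(\alpha_i,BR_{-i}(\alpha_i,s_r),s_r)$. It then adds a fictitious state $s=1$ with these properties:
\begin{itemize}
\item the follower's payoff is unchanged;
\item the primary's payoff agrees with the true one on a restricted set $\Theta'\times\mathcal{A}_{-i}$, built from half-spaces through $\alpha_i'=\alpha_i^\star-\delta v$;
\item the primary's payoff equals $-M$ outside that set.
\end{itemize}
The Nash equilibrium the primary computes at $s_p=1$ is $(\alpha_i',BR_{-i}(\alpha_i',s_r))$. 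This is exactly what is realized against the true-state follower, and its value strictly exceeds that of $(\alpha_i^\star,\alpha_{-i}^\star)$. For instance, take $U_i=-\tfrac12(\alpha_i-\alpha_{-i})^2+\alpha_{-i}$ and $U_{-i}=-\tfrac12(\alpha_{-i}-\tfrac{\alpha_i}{2}-\tfrac14)^2$, with the restriction $\alpha_i\in[\tfrac34,1]$. This gives $\tfrac{79}{128}>\tfrac12$. Both profiles are genuine equilibria, so the gain is a strategic commitment effect rather than a tie-breaking artifact.
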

\begin{proof}[Proof of Proposition~\ref{prop:suboptimal-truth}]

We follow the proof structure and argument of Theorem 3.4 in \cite{handina2024understanding}. We work in the following setting: we have a two-player continuous game in which the primary agent (player~$i$) has strategy space from a compact set $\mathcal{A}_i \subseteq \mathbb{R}$ and the opposing agent (player~$-i$) has strategy space from a compact set $\mathcal{A}_{-i} \subseteq{\mathbb{R}}$, with utilities $U_i(\alpha_i, \alpha_{-i}, s)$ and $U_{-i}(\alpha_i, \alpha_{-i}, s)$ respectively. The exogenous state space $\mathcal{S} = \{0, 1\} $.  We assume the following.

\begin{assumption}
Assume the game defined on $U_i$ and $U_{-i}$ is strongly monotone on $\mathcal{A}_i \times \mathcal{A}_{-i}$. Further assume that
\begin{enumerate}
\item $U_i$ and $U_{-i}$ are jointly concave in $\alpha_i$ and $\alpha_{-i}$.
\item The gradient mappings $\nabla U_i$ and $\nabla U_{-i}$ exist and are well defined for all $(\alpha_i, \alpha_{-i})$. Furthermore, the gradient mappings are $L$-Lipschitz continuous in the joint action space.
\item The Nash equilibrium $(\alpha_{i}^{\star}, \alpha_{-i}^{\star}) \in \mathcal{A}_i \times \mathcal{A}_{-i}$ is in the interior with $\nabla_{\alpha_i} BR_{-i}(\alpha_{i}^{\star}, s_r) \neq 0$.
\item The unique Nash equilibrium $(\alpha_{i}^{\star}, \alpha_{-i}^{\star})$ is not Pareto-optimal: there exists $(\alpha_{i}^{\dagger}, \alpha_{-i}^{\dagger}) \in \mathcal{A}_i \times \mathcal{A}_{-i}$ such that $U_i(\alpha_{i}^{\dagger}, \alpha_{-i}^{\dagger}, s_r) > U_i(\alpha_{i}^{\star}, \alpha_{-i}^{\star}, s_r)$.
\end{enumerate}
\end{assumption}

Note the following: because we can achieve strictly higher utility for the jointly concave utility function $U_i$, we know that $\nabla U_i(\alpha_{i}^{\star}, \alpha_{-i}^{\star}, s_r) \neq 0$.

We define $\bar{U}_i(\alpha_i) := U_i(\alpha_i, BR_{-i}(\alpha_i, s_r), s_r)$ as the primary agent's utility when they play $\alpha_i$ and the opposing agent best-responds under the true state $s_r$. \cite{handina2024understanding} establishes that there exists a $v \in \mathbb{R}$ with $ v \cdot \nabla \bar{U}_i(\alpha_{i}^{\star}) < 0$ and a value $\delta > 0$ such that the following construction yields a Nash equilibrium with strictly higher utility for the primary agent. Specifically, setting $\alpha_i' = \alpha_{i}^{\star} - \delta v$, $\alpha_{-i}' = BR_{-i}(\alpha_i', s_r)$, and defining
\[
\tilde{\Theta} = \{ \alpha_i \in \mathcal{A}_i : ( \alpha_i - \alpha_i') \cdot v \leq 0 \}, \quad \Theta' = \{ \alpha_i \in \tilde{\Theta} : \nabla_{\alpha_i} U_i(\alpha_i', \alpha_{-i}', s_r) \cdot (\alpha_i' - \alpha_i) \geq 0 \},
\]
the restricted game on $\Theta' \times \mathcal{A}_{-i}$ admits $(\alpha_i', \alpha_{-i}')$ as a Nash equilibrium. By a Taylor expansion argument using the $L$-Lipschitz gradient assumption, one can choose $\delta$ small enough so that $\bar{U}_i(\alpha_i') > \bar{U}_i(\alpha_{i}^{\star})$, meaning the primary agent achieves strictly higher utility at this new equilibrium than at $(\alpha_{i}^{\star}, \alpha_{-i}^{\star})$. To complete the argument, for player $i$ let $s = 0$ correspond to $U_i(\alpha_i, \alpha_{-i}, 0) > -M \quad \forall (\alpha_i, \alpha_{-i})$ with $M \in \mathbb{R}_{+}$ and then for $s = 1$ let $U_i(\alpha_i, \alpha_{-i}, 1) = U_i(\alpha_i, \alpha_{-i}, 0) \quad \forall (\alpha_i, \alpha_{-i}) \in \Theta' \times \mathcal{A}_{-i}$ and $U_i(\alpha_i, \alpha_{-i}, 1) = -M$ otherwise. For player $-i$ we have $U_{-i}(\alpha_i, \alpha_{-i}, 1) = U_{-i}(\alpha_i, \alpha_{-i}, 0)$. In this simple setup, a prediction $(s_p)$ of state $1$ results in better utility for player $i$ even if the real state ($s_r)$ is $0$.

Such games exist; for instance, on $\mathcal{A}_i = \mathcal{A}_{-i} = [0,1]$ the game $U_i(\alpha_i, \alpha_{-i}, 0) = -\tfrac12(\alpha_i - \alpha_{-i})^2 + \alpha_{-i}$, $U_{-i}(\alpha_i, \alpha_{-i}, 0) = -\tfrac12\big(\alpha_{-i} - \tfrac{\alpha_i}{2} - \tfrac14\big)^2$ satisfies the assumption: it is strongly monotone with $\mu = \tfrac14$, both utilities are concave with affine gradients, $BR_{-i}(\alpha_i, 0) = \tfrac{\alpha_i}{2} + \tfrac14$ has derivative $\tfrac12 \neq 0$, the unique Nash equilibrium $(\tfrac12, \tfrac12)$ is interior, and $U_i(1, \frac{3}{4}, 0) = \frac{23}{32} > \tfrac12 = U_i(\tfrac12, \tfrac12, 0)$ and $U_{-i}(1, \frac{3}{4}, 0) = 0 = U_{-i}(\tfrac12, \tfrac12, 0)$. Consider the game restricted for $ \alpha_i$ to  $[\tfrac34, 1]$, we get $\alpha_i' = \tfrac34$, $\alpha_{-i}' = \tfrac58$ and $\bar{U}_i(\tfrac34) = \tfrac{79}{128} > \tfrac{1}{2} = \bar{U}_i(\tfrac12)$.

As a note, this proof, together with the establishment of $v$ from \cite{handina2024understanding}, can be extended to more dimensions by adding the assumption that $\nabla_{\alpha_i} BR_{-i}(\alpha_{i}^{\star}, s_r)$ has rank equal to the dimension of $\mathcal{A}_{-i}$. 

\end{proof}
\begin{theorem*}{~\ref{Thm:volatility}}
Suppose the sequence of games $\{G_N\}_{N=2}^{\infty}$ satisfies
Assumption~\ref{assumption:game_sparse} with $\delta_N = c/N$ for some $c \in (0, \tfrac{1}{2}]$. For each $N$, let $\alpha \sim \mathrm{Unif}(\Delta(\mathcal{A}))$
in game $G_N$, then:
\[
  \mathbb{P}\!\big(\alpha \text{ is } \delta_N\text{-regional}\big)
  \;\geq\; 1 - \left(1 - \frac{c\sqrt{2}}{N}\right)^{\!N-1}.
\]
\end{theorem*}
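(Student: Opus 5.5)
The plan is to reduce $\delta_N$-regionality to a statement about a single coordinate of $\alpha$, and then use the explicit marginal law of the uniform distribution on the simplex. Throughout, write $\delta=\delta_N=c/N$ and $\Delta=\Delta(\mathcal{A})\subset\mathbb{R}^N$.

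\emph{Step 1 (the regions cover the simplex).} Player $B$'s payoff is linear in its own strategy, so $BR(\alpha)$ always contains a pure action. Hence $\bigcup_{b\in[N]} R_b^{A,(N)}=\Delta$, so every $\alpha$ lies in at least one region. Condition (2) of Assumption~\ref{assumption:game_sparse} gives the contrapositive: any point with $\alpha_b<2\delta$ lies outside $R_b^{A,(N)}$, and therefore lies in some $R_{b'}^{A,(N)}$ with $b'\neq b$. This is the only way I intend to use Assumption~\ref{assumption:game_sparse}. Condition (1) is used only to guarantee that every index is a genuine region label.

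\emph{Step 2 (a sufficient geometric condition).} Fix $\alpha\in R_b^{A,(N)}$ and pick any $j\neq b$. Form $\alpha'=\alpha-t(e_b-e_j)$ with $0<t\le\alpha_b$, which stays in $\Delta$. Then $\|\alpha-\alpha'\|_2=\sqrt2\,t$. If $t>\alpha_b-2\delta$, Step 1 puts $\alpha'$ in a region with a different label. So $\alpha$ is $\delta$-regional as soon as we can choose $t\in(\alpha_b-2\delta,\ \delta/\sqrt2]$.

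This suggests the target event: the relevant coordinate lies within order $\sqrt2\,\delta=c\sqrt2/N$ of the $2\delta$ threshold. More precisely, I want an event of the form $\{\alpha_k<c\sqrt2/N\}$ for one \emph{fixed} index $k$, which forces regionality. The way I would try to get this is as follows. Any point with $\alpha_k<c\sqrt2/N<2\delta$ is excluded from $R_k^{A,(N)}$. On the other hand, some nearby point, obtained by the transfer in the reverse direction, is forced toward $R_k^{A,(N)}$. So the segment between $\alpha$ and its perturbation must cross a region boundary.

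\emph{Step 3 (probability).} For $\alpha\sim\mathrm{Unif}(\Delta)$, i.e.\ $\mathrm{Dirichlet}(1,\dots,1)$, each marginal $\alpha_k$ is $\mathrm{Beta}(1,N-1)$. Hence $\mathbb{P}(\alpha_k\ge x)=(1-x)^{N-1}$ for $x\in[0,1]$. Taking $x=c\sqrt2/N$, which is at most $1$ since $c\le\tfrac12$, gives
\[
\mathbb{P}(\alpha\text{ is }\delta_N\text{-regional})\;\ge\;\mathbb{P}\big(\alpha_k<c\sqrt2/N\big)\;=\;1-\Big(1-\frac{c\sqrt2}{N}\Big)^{N-1}.
\]

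\emph{Main obstacle.} The hard step is Step 2, namely turning the small coordinate into regionality. The region containing $\alpha$ depends on $\alpha$, and condition (2) only gives membership information in one direction (small $\alpha_b$ excludes $R_b$). It does not say that large $\alpha_k$ places a point inside $R_k$.

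My first attempt would be the transfer argument along the edge direction $e_b-e_j$ for the \emph{current} region label $b$. This gives a sufficient condition of the form $\alpha_b<2\delta+\delta/\sqrt2$. I would then try to show that, for a fixed index $k$, the event $\{\alpha_k<\sqrt2\,\delta\}$ is contained in the event that some region's label coordinate is within the transfer budget. I would do this by following the straight path from $\alpha$ toward $e_k$ and locating the first label change along it, using connectedness of the segment and closedness of the regions, which are polyhedra.

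If that containment fails, the fallback is to prove the bound with the Beta tail evaluated at the threshold $2\delta+\delta/\sqrt2$ shifted appropriately. This yields a bound of the same $1-(1-\Theta(1/N))^{N-1}$ form, and I would then check whether the constants can be matched to $c\sqrt2$.
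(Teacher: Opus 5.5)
There is a genuine gap, and it is the step you flagged yourself. The inclusion $\{\alpha_k<c\sqrt2/N\}\subseteq\{\alpha\text{ is }\delta_N\text{-regional}\}$ for a fixed index $k$ is false, not merely unproven, so no segment argument toward $e_k$ can establish it. Take the game in which $B$'s payoff for action $b$ is $\alpha_b$, so that $R^{A,(N)}_b=\{\alpha\in\Delta:\alpha_b=\max_j\alpha_j\}$. This game satisfies Assumption~\ref{assumption:game_sparse} for every $c\le\tfrac12$, because the largest coordinate is always at least $1/N\ge 2\delta_N$. Fix $b\neq k$. Every point $\alpha'$ in a region other than $R^{A,(N)}_b$ has $\alpha'_b\le\tfrac12$, so $\|e_b-\alpha'\|_2\ge\tfrac12>\delta_N$. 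Hence the intersection of $\{\alpha_k<c\sqrt2/N\}$ with a small neighbourhood of $e_b$ has positive measure and consists entirely of non-regional points. Along the segment from $e_b$ to $e_k$, the label first changes only at distance $1/\sqrt2$.

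This example also shows why no pointwise reduction to one coordinate can work. In this game, $\delta_N$-regionality is exactly the event $\alpha_{(1)}-\alpha_{(2)}\le\sqrt2\,\delta_N$, where $\alpha_{(1)},\alpha_{(2)}$ are the two largest coordinates. By the R\'enyi representation, $\alpha_{(1)}-\alpha_{(2)}$ is itself $\mathrm{Beta}(1,N-1)$, so the theorem holds with equality here. The two events therefore have the same probability but are different sets.

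Your fallback fails for a related reason. The coordinate $\alpha_{b(\alpha)}$ indexed by the current label is random and depends on the game, so it is not $\mathrm{Beta}(1,N-1)$. In this example it is $\max_j\alpha_j$, which is of order $\ln N/N$. Consequently $\mathbb{P}\bigl(\alpha_{b(\alpha)}<(2+1/\sqrt2)\,\delta_N\bigr)\to 0$, while the target bound tends to $1-e^{-c\sqrt2}$.

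The missing idea is that the bound is a volume comparison, not an inclusion of events. The paper first removes the $\delta_N$-neighbourhood $S_{\delta_N}$ of the pairwise intersections $R^{A,(N)}_b\cap R^{A,(N)}_{b'}$; regionality is equivalent to membership in $S_{\delta_N}$. It then translates each leftover piece $\overline{R}_b=R^{A,(N)}_b\setminus S_{\delta_N}$ by $-\sqrt2\,\delta_N e_b$.

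\begin{itemize}
\item \emph{Containment.} Condition (2), which is exactly your Step 1 observation, gives $\alpha_b\ge 2\delta_N>\sqrt2\,\delta_N$. So the translates stay nonnegative and lie in $(1-\sqrt2\,\delta_N)\cdot\Delta$.
\item \emph{Disjointness.} If $\alpha-\sqrt2\,\delta_N e_b=\beta-\sqrt2\,\delta_N e_{b'}$, then $\|\alpha-\beta\|_2=2\delta_N$. The region containing the midpoint differs from $b$ or from $b'$. Hence the half-segment from $\alpha$ or from $\beta$ to the midpoint meets $S$ within distance $\delta_N$, contradicting non-regionality.
\end{itemize}

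Summing volumes of the translates gives $\mu_{N-1}(\Delta\setminus S_{\delta_N})\le(1-\sqrt2\,\delta_N)^{N-1}\mu_{N-1}(\Delta)$.

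Your Step 3 number is the same quantity in disguise. The set $\{\alpha_k\ge x\}=xe_k+(1-x)\Delta$ has relative volume $(1-x)^{N-1}$. What you need is therefore not an inclusion of the non-regional set into $\{\alpha_k\ge x\}$. It is an injective, piecewise-translation map $\alpha\mapsto\alpha-\sqrt2\,\delta_N e_{b(\alpha)}+\sqrt2\,\delta_N e_k$ from the non-regional set into $\{\alpha_k\ge\sqrt2\,\delta_N\}$, and that is precisely the shrink-and-shift argument.
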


\begin{proof}[Proof of Theorem~\ref{Thm:volatility}]
We instantiate the geometric arguments of~\cite{MirzakhaniVondrak2017}
to our context.
The conditions of Assumption~\ref{assumption:game_sparse} ensure that
the best-response regions $\{R^{A,(N)}_b\}_{b \in [N]}$ form a
Knaster--Kuratowski--Mazurkiewicz (KKM) covering~\cite{KnasterKuratowskiMazurkiewicz1929}
of the simplex $\Delta(\mathcal{A})$.
Let $\mu_{N-1}$ denote the $(N{-}1)$-dimensional Lebesgue measure on $\Delta(\mathcal{A})$.

Define $S = \bigcup_{b \neq b'} (R^{A,(N)}_b \cap R^{A,(N)}_{b'})$
and its $\delta_N$-neighborhood
$S_{\delta_N} = \{ \alpha \in \Delta(\mathcal{A}) : \exists\, \alpha' \in S,\;
\|\alpha - \alpha'\|_2 \leq \delta_N \}$.
By definition, $\alpha$ is $\delta_N$-regional if and only if
$\alpha \in S_{\delta_N}$.
Set $\overline{R}^{A,(N)}_b = R^{A,(N)}_b \setminus S_{\delta_N}$,
so that
$\bigcup_{b \in [N]} \overline{R}^{A,(N)}_b = \Delta(\mathcal{A}) \setminus S_{\delta_N}$.
Let $\delta_N' = \delta_N\sqrt{2}$ and define the shifted regions
\[
\widetilde{R}^{A,(N)}_b
\;=\; \overline{R}^{A,(N)}_b - \delta_N' e_b
\;=\; \bigl\{\alpha - \delta_N' e_b : \alpha \in \overline{R}^{A,(N)}_b\bigr\}.
\]

The following claim adapts the shrink-and-shift
technique of~\cite[Lemma~3]{MirzakhaniVondrak2017} to our setting.

\begin{claim}\label{claim:shifted_disjoint}
The sets $\widetilde{R}^{A,(N)}_1, \ldots, \widetilde{R}^{A,(N)}_N$
are pairwise disjoint subsets of $(1 - \delta_N') \cdot \Delta(\mathcal{A})$.
\end{claim}

\begin{proof}[Proof of Claim~\ref{claim:shifted_disjoint}]
\emph{Disjointness.}\;
Suppose for contradiction that
$\widetilde{R}^{A,(N)}_b \cap \widetilde{R}^{A,(N)}_{b'} \neq \emptyset$
for some $b \neq b'$.
Then there exist $\alpha \in \overline{R}^{A,(N)}_b$ and
$\beta \in \overline{R}^{A,(N)}_{b'}$ with
$\alpha - \delta_N' e_b = \beta - \delta_N' e_{b'}$,
giving
$\|\alpha - \beta\| = \delta_N' \|e_b - e_{b'}\| = \delta_N'\sqrt{2} = 2\delta_N$.
The midpoint $m = \tfrac{1}{2}(\alpha + \beta)$ lies in $\Delta(\mathcal{A})$
by convexity, so $m \in R^{A,(N)}_\ell$ for some $\ell$
with $\ell \neq b$ or $\ell \neq b'$ (possibly both).
Assuming without loss of generality that $\ell \neq b$,
closedness of $R^{A,(N)}_b$ and $R^{A,(N)}_\ell$ yields a point
$\alpha' \in R^{A,(N)}_b \cap R^{A,(N)}_\ell \subseteq S$
on the segment from $\alpha$ to $m$, with
$\|\alpha' - \alpha\| \leq \|m - \alpha\| = \delta_N$.
This places $\alpha \in S_{\delta_N}$,
contradicting $\alpha \in \overline{R}^{A,(N)}_b$.

\emph{Containment.}\;
For any $\alpha \in \overline{R}^{A,(N)}_b$,
Assumption~\ref{assumption:game_sparse} gives
$\alpha_b \geq 2\delta_N > \delta_N' = \delta_N\sqrt{2}$,
so the shifted point $\alpha - \delta_N' e_b$ has all coordinates nonnegative
and summing to $1 - \delta_N'$.
Hence $\widetilde{R}^{A,(N)}_b \subseteq (1 - \delta_N') \cdot \Delta(\mathcal{A})$.
\end{proof}

Since each $\widetilde{R}^{A,(N)}_b$
is a translate of $\overline{R}^{A,(N)}_b$,
Claim~\ref{claim:shifted_disjoint} gives
\begin{align*}
\mu_{N-1}\bigl(\Delta(\mathcal{A}) \setminus S_{\delta_N}\bigr)
&= \sum_{b=1}^{N} \mu_{N-1}\bigl(\overline{R}^{A,(N)}_b\bigr)
= \sum_{b=1}^{N} \mu_{N-1}\bigl(\widetilde{R}^{A,(N)}_b\bigr) \\
&\leq \mu_{N-1}\bigl((1 - \delta_N') \cdot \Delta(\mathcal{A})\bigr)
= (1 - \delta_N\sqrt{2})^{N-1}\, \mu_{N-1}\bigl(\Delta(\mathcal{A})\bigr),
\end{align*}
and therefore
\[
\mathbb{P}\big(\alpha \text{ is } \delta_N\text{-regional}\big)
= \frac{\mu_{N-1}(S_{\delta_N})}{\mu_{N-1}(\Delta(\mathcal{A}))}
\geq 1 - (1 - \delta_N\sqrt{2})^{N-1},
\]
where $\alpha \sim \mathrm{Unif}(\Delta(\mathcal{A}))$.
Setting $\delta_N = c/N$ for $c \in (0, \tfrac{1}{2}]$, this becomes
\[
\mathbb{P}\big(\alpha \text{ is } \delta_N\text{-regional}\big)
\geq 1 - \left(1 - \frac{c\sqrt{2}}{N}\right)^{N-1}.
\]
\end{proof}

\begin{corollary}{\label{cor:volatility_limit}}
Under the hypotheses of Theorem~\ref{Thm:volatility},
\[
  \liminf_{N \to \infty}\;
  \mathbb{P}\!\big(\alpha \text{ is } \delta_N\text{-regional}\big)
  \;\geq\; 1 - e^{-c\sqrt{2}}.
\]
\end{corollary}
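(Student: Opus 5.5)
The plan is to take the finite-$N$ lower bound of Theorem~\ref{Thm:volatility} as given and compute the limit of its right-hand side. For every $N \ge 2$ the theorem gives
\[
\mathbb{P}\big(\alpha \text{ is } \delta_N\text{-regional}\big) \;\ge\; 1 - \Bigl(1 - \tfrac{c\sqrt{2}}{N}\Bigr)^{N-1} =: b_N .
\]
Taking $\liminf$ of both sides preserves the inequality, so $\liminf_N \mathbb{P}(\cdot) \ge \liminf_N b_N$. It therefore suffices to show $\lim_{N\to\infty} b_N = 1 - e^{-c\sqrt{2}}$.

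For the limit, write $x = c\sqrt{2}$. Since $c \le \tfrac12$, we have $x \le \tfrac{\sqrt2}{2} < 1$, so $1 - x/N \in (0,1)$ for all $N \ge 2$ and the logarithm is defined. Then
\[
(N-1)\log\bigl(1 - \tfrac{x}{N}\bigr) = -\tfrac{(N-1)x}{N} + O\bigl(\tfrac{1}{N}\bigr) \;\to\; -x,
\]
using $\log(1-u) = -u + O(u^2)$ as $u \to 0$. Continuity of $\exp$ then gives $(1 - x/N)^{N-1} \to e^{-x}$, and hence $b_N \to 1 - e^{-c\sqrt{2}}$.

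Alternatively, the elementary bound $\log(1-u) \le -u$ gives $(1-x/N)^{N-1} \le e^{-x(N-1)/N}$ directly. This yields $b_N \ge 1 - e^{-x(1-1/N)}$, whose limit is $1 - e^{-x}$, and only the lower bound is needed.

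There is no real obstacle. The one point to check is that $1 - c\sqrt2/N$ stays positive, which is guaranteed by the restriction $c \le \tfrac12$ in the hypotheses.
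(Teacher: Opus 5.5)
Your proposal is correct and matches the paper's proof: your alternative argument via $\log(1-u)\le -u$, giving $(1-c\sqrt{2}/N)^{N-1}\le e^{-c\sqrt{2}(1-1/N)}$ and then letting $N\to\infty$, is exactly what the paper does. Your primary route via $\log(1-u)=-u+O(u^2)$ differs only in that it computes the exact limit of the theorem's bound, which shows $1-e^{-c\sqrt{2}}$ is the best constant that bound can give; the paper needs only the one-sided inequality.
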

\begin{proof}
Since $\ln(1 - x) \leq -x$ for all $x \in (0,1)$,
\[
\left(1 - \frac{c\sqrt{2}}{N}\right)^{N-1}
\leq e^{-c\sqrt{2}(1 - 1/N)},
\]
and taking $N \to \infty$ yields the result.   
\end{proof}

\begin{theorem*}{~\ref{thm:no-regret}}
    Consider a game $G$ with a finite set of exogenous states, and let $s_r$ denote the true realized state. Let $\alpha_{i,t}^{s}$ denote the strategy selected by player $i$ at time step $t$ given state prediction $s$ produced by algorithm $\mathcal{ALG}$, and let $(\bar{\alpha}_{i}^s, \bar{\alpha}_{-i}^s)$ be the unique equilibrium joint strategy profile given prediction state $s$. Suppose utilities are bounded, i.e.\ $U_i(\cdot, \cdot, s) \in [0, 1]$ for all $s$. If
    \[
        \underset{\mathcal{ALG}(s)}{\mathbb{E}}\!\left[\sum_{t=1}^T \bigl|U_i(\alpha_{i,t}^{s}, \alpha_{-i,t}^{s}, s') - U_i(\bar{\alpha}_{i}^{s}, \bar{\alpha}_{-i}^{s}, s')\bigr|\right] \in o(T) \quad \forall\, \text{fixed } (s, s'),
    \]
    then $\mathcal{ALG'}$, which combines $\mathcal{ALG}$ with successive elimination to determine  an evaluation state, yields strategies satisfying
    \[
    \underset{\mathcal{ALG'}}{\mathbb{E}}\!\left[\sum_{t=1}^T \left|U_i(\alpha_{i,t}^{s_t}, 
    \alpha_{-i,t}^{s_t}, s_r) - \max_{s \in \mathcal{S}}\, 
    U_i(\bar{\alpha}_{i}^{s}, \bar{\alpha}_{-i}^{s}, s_r)\right|\right] \in o(T).
    \]
    % then $\mathcal{ALG'}$ which combines $\mathcal{ALG}$ with successive elimination to determine an evaluation state yields strategies satisfying
    % \[
    %     \underset{\mathcal{ALG'}}{\mathbb{E}}\!\left[\sum_{t=1}^T \bigl|U_i(\alpha_{i,t}^{s_t}, \alpha_{-i,t}^{s_t}, s_r) - U_i(\bar{\alpha}_{i}^{s_r}, \bar{\alpha}_{-i}^{s_r}, s_r)\bigr|\right] \in o(T).
    % \]
\end{theorem*}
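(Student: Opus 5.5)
We treat the finite state set $\mathcal{S} = \{s^1,\dots,s^K\}$ as the arms of a stochastic bandit. Arm $s$ has mean reward $\mu(s) := U_i(\bar{\alpha}_i^{s}, \bar{\alpha}_{-i}^{s}, s_r)$. Let $s^\star \in \arg\max_s \mu(s)$ and $\Delta_s := \mu(s^\star) - \mu(s) \ge 0$.

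$\mathcal{ALG'}$ runs successive elimination over $\mathcal{S}$. Whenever an arm $s$ is pulled, it advances a dedicated copy $\mathcal{ALG}(s)$ of the equilibrium-learning routine by one step and plays $(\alpha_{i,t}^{s}, \alpha_{-i,t}^{s})$. It observes the realized utility under $s_r$, which lies in $[0,1]$. After each round in which all active arms are pulled, it eliminates every arm whose empirical mean falls below the best empirical mean minus twice a Hoeffding radius $\sqrt{2\log T / n}$, where $n$ is the arm's pull count. The target quantity is
\[
\sum_t \bigl|U_i(\alpha_{i,t}^{s_t},\alpha_{-i,t}^{s_t},s_r) - \mu(s^\star)\bigr|,
\]
and the triangle inequality splits it into two pieces,
\[
\sum_t \bigl|U_i(\alpha_{i,t}^{s_t},\alpha_{-i,t}^{s_t},s_r) - \mu(s_t)\bigr| \;+\; \sum_t \Delta_{s_t}.
\]
The first is the \emph{equilibrium-learning error} and the second is the \emph{selection regret}.

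For the first term, group rounds by the arm pulled. The rounds in which arm $s$ is pulled are exactly the first $n_s(T)$ steps of $\mathcal{ALG}(s)$. The hypothesis with $s' = s_r$ gives an $o(T)$ bound on its expected accumulated error, and since this quantity is nondecreasing in the horizon, it stays $o(T)$ after truncating to the random horizon $n_s(T) \le T$. Summing over the finitely many arms $K$ still gives $o(T)$.

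For the second term, the obstacle is that the observed rewards of arm $s$ are not i.i.d. with mean $\mu(s)$: they are biased by $\mathcal{ALG}(s)$'s learning error. So I plan to run the elimination test with an enlarged radius $r_n := \sqrt{2\log T/n} + \epsilon_s(n)$, where $\epsilon_s(n) := \frac{1}{n}\sum_{k\le n}|\text{error}_k|$ is the running average bias. Since $\epsilon_s(n)$ is not observable, the cleaner route is a regime argument:
\begin{itemize}
\item \textbf{Large gaps.} Fix any $\gamma > 0$. Pick $n_0(\gamma)$ such that the expected average bias over $n_0$ pulls is below $\gamma/8$ for every arm, using the $o(n)$ hypothesis and Markov's inequality to obtain a high-probability statement. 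Then, with the standard Hoeffding-plus-union-bound argument under a fixed radius, $s^\star$ is never eliminated. Every arm with $\Delta_s > \gamma$ is also eliminated after $O(\log T/\gamma^2 + n_0(\gamma))$ pulls.
\item \textbf{Small gaps.} Arms with $\Delta_s \le \gamma$ contribute at most $\gamma T$.
\end{itemize}
Hence the selection regret is at most
\[
\gamma T + K\bigl(O(\log T/\gamma^2) + n_0(\gamma)\bigr) + T\cdot P(\text{bad event}).
\]
Dividing by $T$ and letting $T \to \infty$, then $\gamma \to 0$, gives $o(T)$.

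The main obstacle is this second step. The hypothesis only controls the bias in expectation, and the bias is correlated with the elimination decisions. So the exact form of the confidence radius, or a restart and doubling trick that makes the bias-controlled event hold with probability $1 - o(1)$, has to be arranged carefully. I would first try a Markov bound on the per-arm cumulative error at the checkpoint $n_0(\gamma)$ combined with a union bound over the $K$ arms. If that proves too loose, I would inflate the radius by a known vanishing sequence that dominates the average bias for $n \ge n_0$.
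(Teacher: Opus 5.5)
Your decomposition into equilibrium-learning error plus selection regret is sound, and the first term is handled correctly once you argue pathwise: a nonnegative sum truncated at $n_s(T)\le T$ is dominated by the full sum of a coupled copy run to $T$. The genuine gap is the step you flag yourself, and neither patch you propose closes it.

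\textbf{Why the proposed patches fail.} With one persistent copy per arm, the rewards of arm $s$ come from a single adaptive run, so they are dependent and Hoeffding does not apply as stated. It is also unnecessary: the hypothesis bounds $|X^s_k-\mu(s)|$ itself, so $|\hat\mu^s_n-\mu(s)|\le\epsilon_s(n)$ deterministically. The whole issue is controlling the random quantity $\epsilon_s(n)$ at the rounds where the large-gap eliminations must fire. Those rounds lie in $[8\log T,\,O(\log T/\gamma^2)]$, since no elimination is possible while $2\sqrt{2\log T/n}>1$, and they drift to infinity with $T$. A Markov bound at a $T$-independent checkpoint $n_0(\gamma)$ says nothing about $\epsilon_s(n)$ there.

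Nor does the hypothesis give control uniformly over $n\ge n_0$. Suppose that, independently for each $j\ge 2$ with probability $1/j$, the copy earns $0$ instead of $\mu(s)=1$ on steps $[2^j,2^{j+1})$. Then $\mathbb{E}\sum_{t\le n}|\mathrm{err}_t|=O(n/\log n)=o(n)$, yet by Borel--Cantelli $\epsilon_s(n)>1/2$ for infinitely many $n$ almost surely. For the same reason your fallback fails: the hypothesis supplies no rate, and no vanishing sequence dominates the realized bias.

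\textbf{The paper's route.} The paper restarts a fresh independent copy of $\mathcal{ALG}(s)$ in each epoch $k$, runs it for $T_k=k$ steps on every surviving arm, and treats the epoch average $\hat R^s_k$ as a single pull. These pulls are independent and bounded, with $|\mathbb{E}\hat R^s_k-R^s|=o(T_k)/T_k$. Hoeffding therefore applies to independent, non-identically distributed samples, with averaged bias at most $k_0/t+\Delta_{\min}/8$. This is essentially your restart/doubling alternative, made concrete.

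\textbf{Rescuing your persistent-copy design.} Alternatively, you can keep persistent copies if you place the Markov checkpoint at the $T$-dependent end of the elimination window. Set $n_1=8\log T$ and $n_2=15\log T/\gamma^2$, and let $G$ be the event that $|\hat\mu^s_n-\mu(s)|\le\gamma/8$ for all $s$ and all $n\in[n_1,n_2]$.
\begin{enumerate}
\item On $G$, $s^\star$ survives through $n_2$, because $2r_n\ge 2\gamma\sqrt{2/15}>\gamma/4$.
\item On $G$, every arm with $\Delta_s>\gamma$ is eliminated by $n_2$, because $2r_{n_2}<3\gamma/4$.
\item Since $n\,\epsilon_s(n)$ is nondecreasing, $\sup_{n\in[n_1,n_2]}\epsilon_s(n)\le (n_2/n_1)\,\epsilon_s(n_2)$. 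The window has bounded multiplicative width, so Markov at $n_2\to\infty$ gives $P(G^c)\le 15K\gamma^{-3}\max_s\mathbb{E}\,\epsilon_s(n_2)\to 0$.
\item The selection regret is then at most $Kn_2+\gamma T+T\,P(G^c)$, and your $\gamma\to 0$ limit finishes the argument.
\end{enumerate}
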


\begin{proof}[Proof of Theorem ~\ref{thm:no-regret}]

\textbf{Step 1: Bandit Reduction.} Since $\mathcal{S}$ is finite, we treat each state $s \in \mathcal{S}$ as an arm in a multi-armed bandit problem. The goal of successive elimination is to identify a best evaluation state, i.e.\ a state $s$ under which $\mathcal{ALG}$ achieves the highest utility when evaluated at the true state $s_r$; by Proposition~\ref{prop:suboptimal-truth} this state need not be $s_r$ itself. The key difficulty is that within a single run of $\mathcal{ALG}(s)$, the rewards
\[
    X_t^s := U_i(\alpha_{i,t}^s, \alpha_{-i,t}^s, s_r)
\]
are correlated across time steps $t$, since $\mathcal{ALG}$ is adaptive. We resolve this by redefining what constitutes a single arm pull via an epoch structure.

\textbf{Notation.} For each state $s \in \mathcal{S}$ write
\[
    R^s := U_i(\bar{\alpha}_i^s, \bar{\alpha}_{-i}^s, s_r)
\]
for the arm's true value, the equilibrium utility of prediction $s$ evaluated at the true state; $R^s$ is free of any epoch index. Let $s^* \in \argmax_{s \in \mathcal{S}} R^s$ be a best evaluation state; it need not equal $s_r$. Write $\Delta_s := R^{s^*} - R^s \geq 0$ for the gaps at these values and $\Delta_{\min}$ for the smallest positive gap. (If no gap is positive, every state is optimal and only the exploration cost below is incurred.)

\textbf{Step 2: Epoch Structure.} Define epochs $k = 1, 2, \ldots$ with lengths $T_k = k$. 
Let $\mathcal{S}_k \subseteq \mathcal{S}$ denote the set of surviving arms entering epoch $k$, 
initialized to $\mathcal{S}_1 = \mathcal{S}$. In each epoch $k$, for every surviving arm 
$s \in \mathcal{S}_k$, we run a fresh independent instantiation of $\mathcal{ALG}(s)$ over 
$T_k$ rounds and compute the empirical reward
\[
    \hat{R}_k^s := \frac{1}{T_k}\sum_{t=1}^{T_k} U_i(\alpha_{i,t}^s, \alpha_{-i,t}^s, s_r).
\]
Since each epoch uses a fresh independent instantiation, $\hat{R}_k^s$ and $\hat{R}_{k'}^s$ 
are independent across epochs $k \neq k'$, recovering the independence across pulls that the 
bandit framework requires. Within-epoch correlations are absorbed into the single observation 
$\hat{R}_k^s$. Since $U_i(\cdot,\cdot,s) \in [0,1]$ by assumption, $\hat{R}_k^s \in [0,1]$, and any bounded random variable is sub-Gaussian.

The epoch rewards are not identically distributed. Applied over an epoch of length $T_k$, the hypothesis of the theorem (with $s' = s_r$) gives
\[
    \bigl|\mathbb{E}[\hat{R}_k^s] - R^s\bigr| \;\leq\; \frac{1}{T_k}\, \underset{\mathcal{ALG}(s)}{\mathbb{E}}\!\left[\sum_{t=1}^{T_k} \bigl|U_i(\alpha_{i,t}^{s}, \alpha_{-i,t}^{s}, s_r) - R^s\bigr|\right] \;=\; \frac{o(T_k)}{T_k},
\]
which vanishes as the epoch index $k$ grows. This is a property of $\mathcal{ALG}$ on the instance, independent of the horizon $T$; hence there is a constant $k_0$, entering the analysis only, with per-epoch bias at most $\Delta_{\min}/8$ for all $k \geq k_0$ and all $s \in \mathcal{S}$.

\textbf{Step 3: Successive Elimination over States.} We now apply the Successive Elimination 
algorithm of ~\cite{evendar2002pac} with arms $\mathcal{S}$. Recall that at each epoch 
$k$, arm $s$ produces the independent observation $\hat{R}_k^s \in [0,1]$. The 
empirical mean of arm $s$ after $t$ epochs is
\[
    \hat{p}_t^s := \frac{1}{t}\sum_{k=1}^{t} \hat{R}_k^s, \qquad \text{centered at } \mathbb{E}[\hat{p}_t^s] = \frac{1}{t}\sum_{k=1}^{t} \mathbb{E}[\hat{R}_k^s].
\]
The algorithm maintains a surviving set $\mathcal{S}_t \subseteq \mathcal{S}$ and at each epoch 
$t$ eliminates any arm $s$ satisfying
\[
    \hat{p}_t^{\max} - \hat{p}_t^s \;\geq\; 2\alpha_t, \qquad 
    \alpha_t = \sqrt{\frac{\ln(c|\mathcal{S}|t^2/\delta)}{t}},
\]
where $\hat{p}_t^{\max} = \max_{s \in \mathcal{S}_t} \hat{p}_t^s$, $c > 4$ is an absolute 
constant, and $\delta = 1/T$.

The probabilistic engine of Theorem~3 of ~\cite{evendar2002pac} is the tail bound $\Pr\bigl[|\hat{p}_t^s - \mathbb{E}[\hat{p}_t^s]| \geq \alpha_t\bigr] \leq e^{-\alpha_t^2 t} \leq \delta/(c|\mathcal{S}|t^2)$ with a union bound over epochs and arms. This is Hoeffding's inequality, whose requirements are independence and sub-Gaussian summands, both established in Step~2, and which centers the empirical mean at the average of the summands' means; identical distribution is not among its hypotheses. Hence with probability at least $1 - \delta$,
\[
    |\hat{p}_t^s - \mathbb{E}[\hat{p}_t^s]| \leq \alpha_t \qquad \text{for all } s \in \mathcal{S} \text{ and all } t \geq 1;
\]
call this event $E$.

What must be supplied is the mean control: that $\mathbb{E}[\hat{p}_t^s]$ is close enough to $R^s$. By Step~2,
\[
    \bigl|\mathbb{E}[\hat{p}_t^s] - R^s\bigr| \;\leq\; \frac{k_0}{t} + \frac{\Delta_{\min}}{8} \;\leq\; \frac{\Delta_{\min}}{4} \qquad \text{once } t \geq 8k_0/\Delta_{\min}.
\]
With this in hand, on $E$ the confidence intervals cover $R^s$ with radius $\alpha_t + \Delta_{\min}/4$: for $t \geq 8k_0/\Delta_{\min}$ the best arm is never eliminated by a suboptimal arm $s$, its empirical deficit being at most
\[
    \hat{p}_t^s - \hat{p}_t^{s^*} \;\leq\; -\Delta_s + \frac{\Delta_{\min}}{2} + 2\alpha_t \;<\; 2\alpha_t;
\]
and a suboptimal arm $s$ is eliminated once $4\alpha_t \leq \Delta_s - \Delta_{\min}/2$, for which $\alpha_t \leq \Delta_s/8$ suffices, i.e.\ within $K^* = O(\ln T/\Delta_{\min}^2)$ epochs. Before $t \geq 8k_0/\Delta_{\min}$ no elimination can occur for $T \geq e^{2k_0/\Delta_{\min}}$: an elimination requires $2\alpha_t \leq 1$, hence $t \geq 4\ln T \geq 8k_0/\Delta_{\min}$. An optimal arm may eliminate another optimal arm; this is harmless, and every surviving set contains an optimal arm.

The accounting holds at any finite $T$: the initial epochs before the bias settles cost $O(1)$ rounds; the elimination phase costs
\[
    N_{\mathrm{elim}} \;\leq\; |\mathcal{S}| \sum_{k \leq K^*} T_k \;\leq\; |\mathcal{S}|\, K^{*2} \;=\; o(T)
\]
rounds; and the failure event $E^C$ contributes at most $\delta T = 1$.

%%%
\textbf{Step 4: Bounding the Cumulative Suboptimality.} Split the horizon at the end of epoch $K^*$ and decompose
\begin{align*}
    &\underset{\mathcal{ALG'}}{\mathbb{E}}\!\left[\sum_{t=1}^T \left|U_i(\alpha_{i,t}^{s_t}, 
    \alpha_{-i,t}^{s_t}, s_r) - \max_{s \in \mathcal{S}}\,U_i(\bar{\alpha}_i^s, 
    \bar{\alpha}_{-i}^s, s_r)\right|\right] \\
    &\leq \underset{\mathcal{ALG'}}{\mathbb{E}}\!\left[\sum_{t=1}^T \left|\cdots\right| \,\Big|\, E\right] + \underset{\mathcal{ALG'}}{\mathbb{E}}\!\left[\sum_{t=1}^T 
    \left|\cdots\right| \,\Big|\, E^C\right] \cdot \mathbb{P}(E^C).
\end{align*}

\noindent\textit{Contribution from $E^C$.} Since $U_i \in [0,1]$, each summand is bounded by $1$, so this term is at most $T \cdot \mathbb{P}(E^C) \leq T\delta = 1$.

\noindent\textit{Rounds up to epoch $K^*$, on $E$.} These number at most $N_{\mathrm{elim}} = o(T)$, each contributing at most $1$.

\noindent\textit{Rounds after epoch $K^*$, on $E$.} By Step~3 every surviving arm $s$ satisfies $R^s = \max_{s' \in \mathcal{S}} R^{s'}$. If a single arm survives, the algorithm commits to it and, by the hypothesis of the theorem applied at that $s$,
\[
    \underset{\mathcal{ALG}(s)}{\mathbb{E}}\!\left[\sum_{t} \left|U_i(\alpha_{i,t}^{s}, 
    \alpha_{-i,t}^{s}, s_r) - R^s\right|\right] \in o(T),
\]
so the remaining rounds contribute $o(T)$. If several optimal arms survive, the epochs continue on them; each epoch on an optimal arm contributes $o(T_k)$ by the same hypothesis, uniformly over the finitely many arms, so these epochs contribute $o(T)$ in total.

\noindent Combining the three contributions, the total cumulative suboptimality of $\mathcal{ALG'}$ 
satisfies
\[
    \underset{\mathcal{ALG'}}{\mathbb{E}}\!\left[\sum_{t=1}^T \left|U_i(\alpha_{i,t}^{s_t}, 
    \alpha_{-i,t}^{s_t}, s_r) - \max_{s \in \mathcal{S}}\,U_i(\bar{\alpha}_i^s, 
    \bar{\alpha}_{-i}^s, s_r)\right|\right] \;\in\; o(T). 
\]

\end{proof}

\section{Experimental Details}\label{sec:app-exp-details}

\begin{figure*}[h]
    \centering
    \subcaptionbox{Etosha National Park with adaptive patrol-grid overlay \cite{getz2018etosha,esri2024imagery}.\label{fig:etosha_map}}{%
        \includegraphics[width=0.48\textwidth]{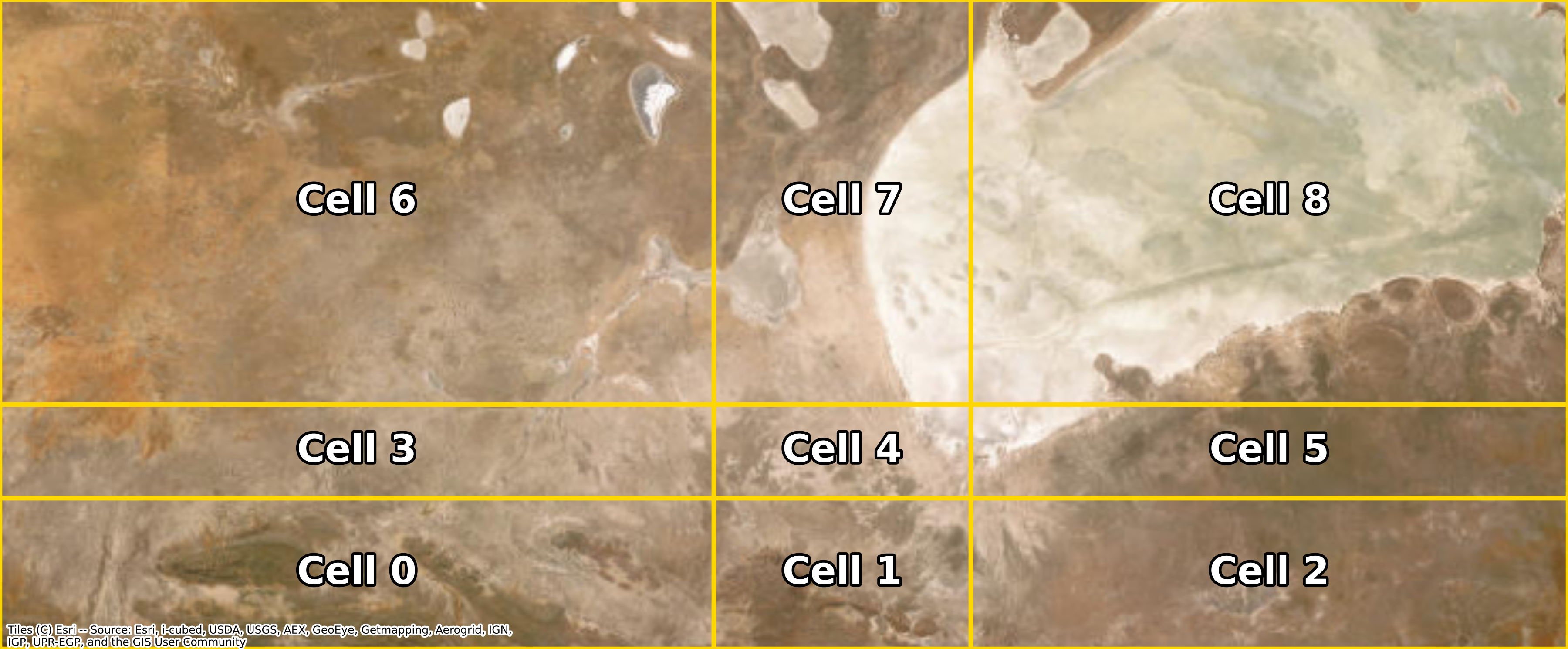}%
    }%
    \hfill
    \subcaptionbox{123 Manhattan subway stations used as targets \cite{mta2025ridership,osm2024}.\label{fig:mta_map}}{%
        \includegraphics[width=0.48\textwidth]{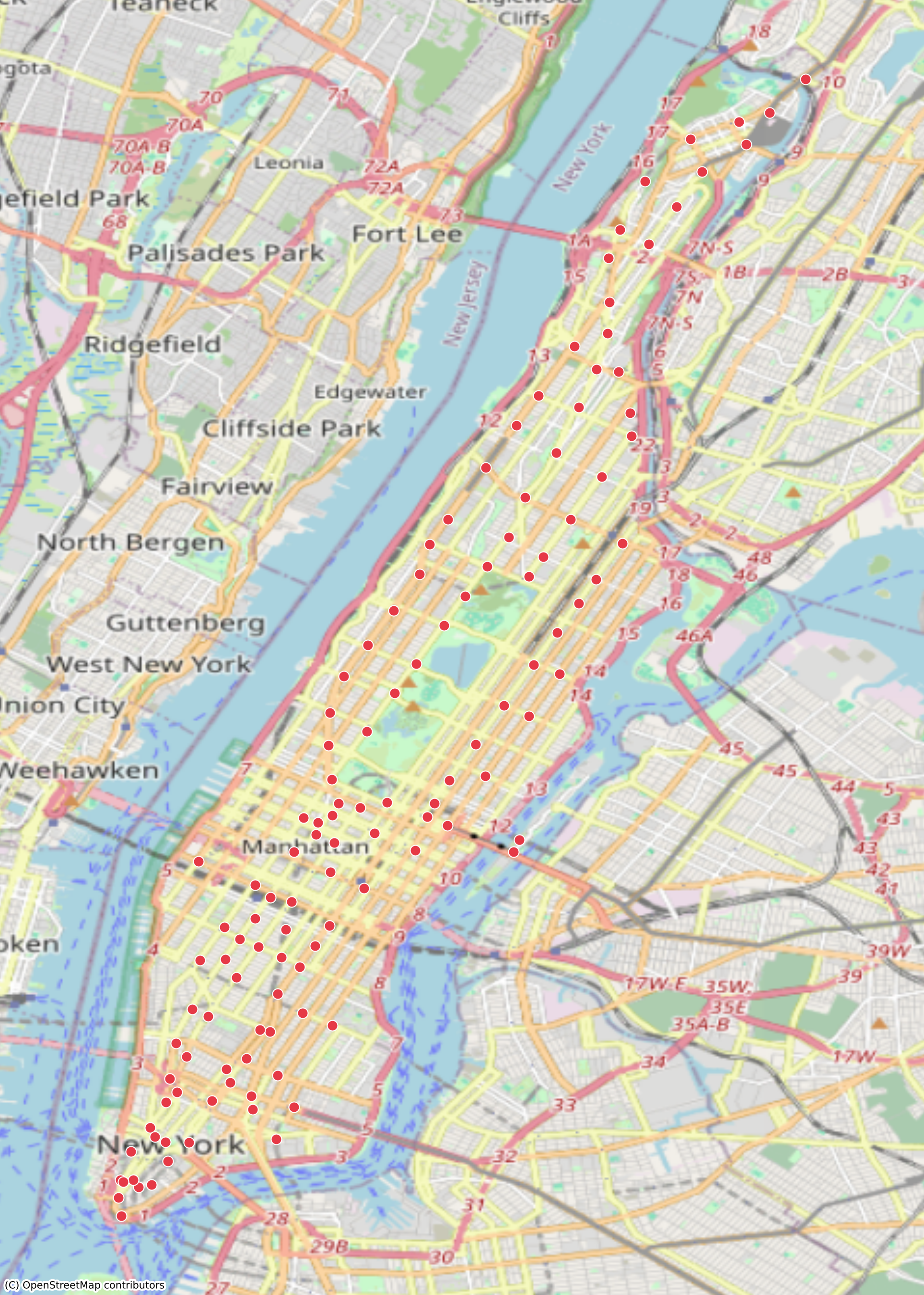}%
    }
    \caption{Geographic visualizations of the two game environments. (a) Etosha National Park with patrol-grid overlay (Conservation Security Game). (b) 123 Manhattan subway stations used as targets (Infrastructure Security Game).}
    \label{fig:env_maps}
\end{figure*}

\textbf{Conservation Security Game.} The cell values record elephant counts, drawn from a study tracking 15 African elephants within Etosha National Park published on the Movebank repository \cite{getz2018etosha}, covering movements from 2008 to 2014. The patrol-grid overlay partitions the reserve into nine cells whose boundaries are calibrated to elephant population density and its temporal variation: regions of high and variable activity receive finer-grained cells, while persistently low-activity regions are consolidated into coarser ones. This was to ensure that the experimental setup reflects where the distribution of patrol resources would be most meaningful. We construct 396 game instances spanning February 2013 to March 2014, one per evaluation day. We average the utilities across all instances for both ETE and PTO approaches and compare across training runs for models allowed to train up until a different number of epochs.

\textbf{Infrastructure Security Game.} The cell values record subway ridership, drawn from New York Metropolitan Transit Authority records \cite{mta2025ridership}, covering 123 Manhattan stations from 1 January 2022 to 31 December 2024. We construct 1098 game instances by averaging observations over 8-hour windows (approximately 3 instances per day). As with the conservation game, we average utilities across all instances for both approaches and compare across training epoch budgets.

\paragraph{Feature Engineering.}
For each observation, we extract temporal features capturing details about the period in which the observation occurs, e.g., the month, day of week. We additionally construct $W$ autoregressive lag features per cell, where the $i$-th lag corresponds to the value observed at that cell $i$ time periods prior. The full feature vector for each observation is thus $\mathbf{x} = (\text{temporal features}, v_{t-1}, v_{t-2}, \ldots, v_{t-W}) \in \mathbb{R}^{d}$, where $v_{t-i}$ denotes the $i$-th lagged value and $d$ is the total feature dimension. 

\paragraph{Train-Validation-Test Split.}
We employ a strict temporal split to prevent data leakage. For the infrastructure game, all observations from 2024 are held out as the test set, while the pre-2024 data is partitioned into training and validation sets. For the conservation game, we analogously reserve data from February 24 2013 up until March 28 2014 for final temporal segment for testing. In both cases, we identify the set of unique time periods in the training pool and randomly assign 15\% of these time periods---rather than individual rows---to the validation set. This ensures that all cells observed at a given time period fall entirely within either the training or validation fold, preventing the model from exploiting contemporaneous cross-cell information during validation.

\paragraph{Architecture.}
The Base MLP consists of two fully connected hidden layers. The input is projected to a hidden dimension of $h = 256$, followed by Layer Normalization, a SiLU activation, and dropout with probability $p = 0.2$. The second hidden layer reduces the dimension to $h/2 = 128$ with a SiLU activation, and a final linear layer projects to a single scalar output. A ReLU activation is applied at the output to enforce non-negativity, reflecting the physical constraint that state values cannot be negative. This architecture is shared across both domains. For the Conservation Security Game, the input encodes spatial cell features; for the Infrastructure Security Game, the input consists of 132 features per station — six lagged 8-hour ridership windows and a 123-dimensional station one-hot encoding — and the model is applied independently per station to predict the next ridership value.

\paragraph{Training Procedure.}
The supervised loss is the Huber loss, which is robust to outlier values compared with mean squared error. Optimization uses AdamW with weight decay $0.01$ and a ReduceLROnPlateau schedule that halves the learning rate after 5 epochs without improvement in validation Huber loss. The initial learning rate is $\eta = 10^{-3}$ (Conservation) or $\eta = 0.01$ (Infrastructure) for PTO and ZOC, and $\eta = 5 \times 10^{-4}$ for KKT and Hybrid in both domains. Gaussian input noise $\epsilon \sim \mathcal{N}(0, \sigma^2 I)$ with $\sigma = 0.01$ is applied during the supervised updates of PTO, ZOC and Hybrid; KKT uses no input noise, and KKT and Hybrid clip gradient norms at $1.0$. ZOC and Hybrid use regimes of $R = 50$ epochs. All models are trained for 500 epochs over 5 independent seeds. For the Infrastructure Security Game, the KKT and Hybrid models anneal the Huber regularization weight from $\lambda_{\max} = 5.0$ to $\lambda_{\min} = 3.0$ linearly over 300 epochs (for Hybrid, counted from the end of its first supervised regime), with strategic gradient steps restricted to a top-$k = 20$ station subset and up to 32 time windows sampled per epoch. In the Conservation Security Game we do not use Huber regularization in the strategic loss ($\lambda_{\max} = \lambda_{\min} = 0$), and up to 128 time windows are sampled per epoch.

% The model is trained using the Huber loss, which provides robustness to outlier values compared to standard mean squared error. Optimization is performed with AdamW using an initial learning rate of $\eta = 0.01$ and weight decay of $0.01$, with a ReduceLROnPlateau schedule that halves the learning rate after 5 epochs of no improvement in validation loss. Gaussian noise augmentation $\epsilon \sim \mathcal{N}(0, \sigma^2 I)$ with $\sigma = 0.01$ is applied to inputs during training. Each configuration is run over 5 independent seeds. The above applies to both domains. For the Infrastructure Security Game, the KKT and Hybrid models additionally anneal the Huber regularization weight from $\lambda_{\max} = 5.0$ to $\lambda_{\min} = 3.0$ linearly over the first 300 KKT epochs, with strategic gradient steps restricted to a top-$k = 20$ station subset and up to 32 time windows sampled per epoch. In the conservation security game we do not make use of Huber regularization by setting $\lambda_{max} = \lambda_{min} = 0$.

\paragraph{Evaluation.}
We evaluate along both statistical and strategic axes. On the statistical side, we report \emph{Mean Absolute Error} (MAE) between predicted and actual values across all cell-time pairs in the test set. On the strategic side, the predicted state vector $\hat{s}$ is used to construct the payoff matrices of the security game, from which a Strong Stackelberg Equilibrium (SSE) defender strategy is computed via linear programming and evaluated against the \emph{actual} state $s$ to obtain the realized defender utility. This applies to both domains. For the Infrastructure Security Game, the test set comprises 1,098 8-hour windows across 123 stations; we report \emph{per-window utility} (total utility $/ 1098$) as the primary metric.

\textbf{Compute.} All experiments were run on a single GPU; full runs (5 seeds, 500 epochs) take approximately 1--2 hours per domain for PTO and ZOC. For the Infrastructure Security Game, KKT and Hybrid runs take approximately 3--4 hours due to per-epoch LP gradient calls via \texttt{cvxpylayers}. CPU-only execution is supported but substantially slower for the differentiable LP methods.

\subsection*{Controlled Experiments}

% ── Required packages ──────────────────────────────────────────────────────
% \usepackage{booktabs, threeparttable, graphicx, subcaption, amsmath}
% ───────────────────────────────────────────────────────────────────────────

\subsection{Synthetic Conservation Security Game}\label{sec:app-toy-exp}

\paragraph{Setup.}
We design a synthetic security game ($N=100$ cells, catch
penalty $P=10$) to explore \emph{when} each training objective succeeds or
fails, rather than to produce aggregate rankings.
The exogenous state is an animal density map (non-negative, peak cell
normalised to $10$).
The defender receives $n_{\mathrm{obs}}=5$ independent sparse noisy sweeps,
each visiting $20$ of $100$ cells uniformly at random with Poisson-corrupted
counts.
All four methods share the same four-layer CNN (convolutional with instance
normalisation, SiLU activations, Softplus output) trained with AdamW on
$1{,}000$ synthetic scenarios for $200$ epochs.

\textbf{Training objectives.}
\textbf{PTO} minimises Huber loss on predicted density and retains the
final epoch.
\textbf{ZOC} (Zero-Order Checkpointing) trains identically but evaluates
downstream strategic utility on a held-out $15\%$ of training data every
$20$ epochs, retaining the checkpoint that maximises that score.
\textbf{KKT} directly maximises defender utility by backpropagating through
the Stackelberg LP via \texttt{cvxpylayers}: the attacker's best-response
target $t^*$ is found by binary search (detached from the graph), the LP is
re-solved for $t^*$ with gradients on, and exact KKT sensitivities flow back
to model parameters.
\textbf{Hybrid} alternates between $20$-epoch Huber regimes and $20$-epoch
KKT regimes, reverting to the running-best checkpoint after each regime if
strategic utility does not improve.

Density states are drawn from three families:
(i)~\emph{Gaussian blobs} — $2$–$4$ randomly placed modes
($\sigma\in[0.05,0.25]$);
(ii)~\emph{sparse spikes} — $1$–$6$ isolated cells with non-zero value;
and (iii)~\emph{concentric rings}.
Crucially, the three families induce structurally different attacker
best-response landscapes, letting us study regime-dependent behaviour
rather than a single aggregate score. We present the qualitative observations below:

\paragraph{When does Hybrid help?}
Hybrid is most effective on \emph{spatially extended multi-modal} densities
(Figure~\ref{fig:regime-main}, top row), where it outperforms both ZOC and
KKT alone by up to $0.8$ utility units. The revert-to-best filter
consistently accepts the KKT phase on these instances.

\paragraph{When does KKT alone win?}
On \emph{sparse spike} densities (Figure~\ref{fig:regime-main}, second row),
KKT alone achieves the highest utility. Hybrid's revert-to-best filter which leads to 
rejection of some KKT updates, results in sub-optimality for this particular setting.

\paragraph{When does PTO or ZOC win?}
On densities with \emph{two well-separated modes of unequal weight}
(Figure~\ref{fig:regime-main}, third row), accuracy ends up being a very important part of determining the most dense mode. This plays an outsized role in shaping utility.

\paragraph{Summary.}
Hybrid is the natural default for multi-modal, spatially extended densities.
KKT alone dominates when the density is sparse and the best-response target
is unambiguous; it can fail when the landscape is smooth or multi-modal.
PTO and ZOC remain competitive when predictive accuracy and strategic
utility are well-aligned, most commonly for single-dominant-mode
distributions. Designing a switching criterion that conditions Hybrid's
acceptance step on the \emph{stability of $t^*$} rather than a scalar
utility threshold is a key direction for future work.

\begin{figure}[t]
  \centering
  \includegraphics[width=\linewidth]{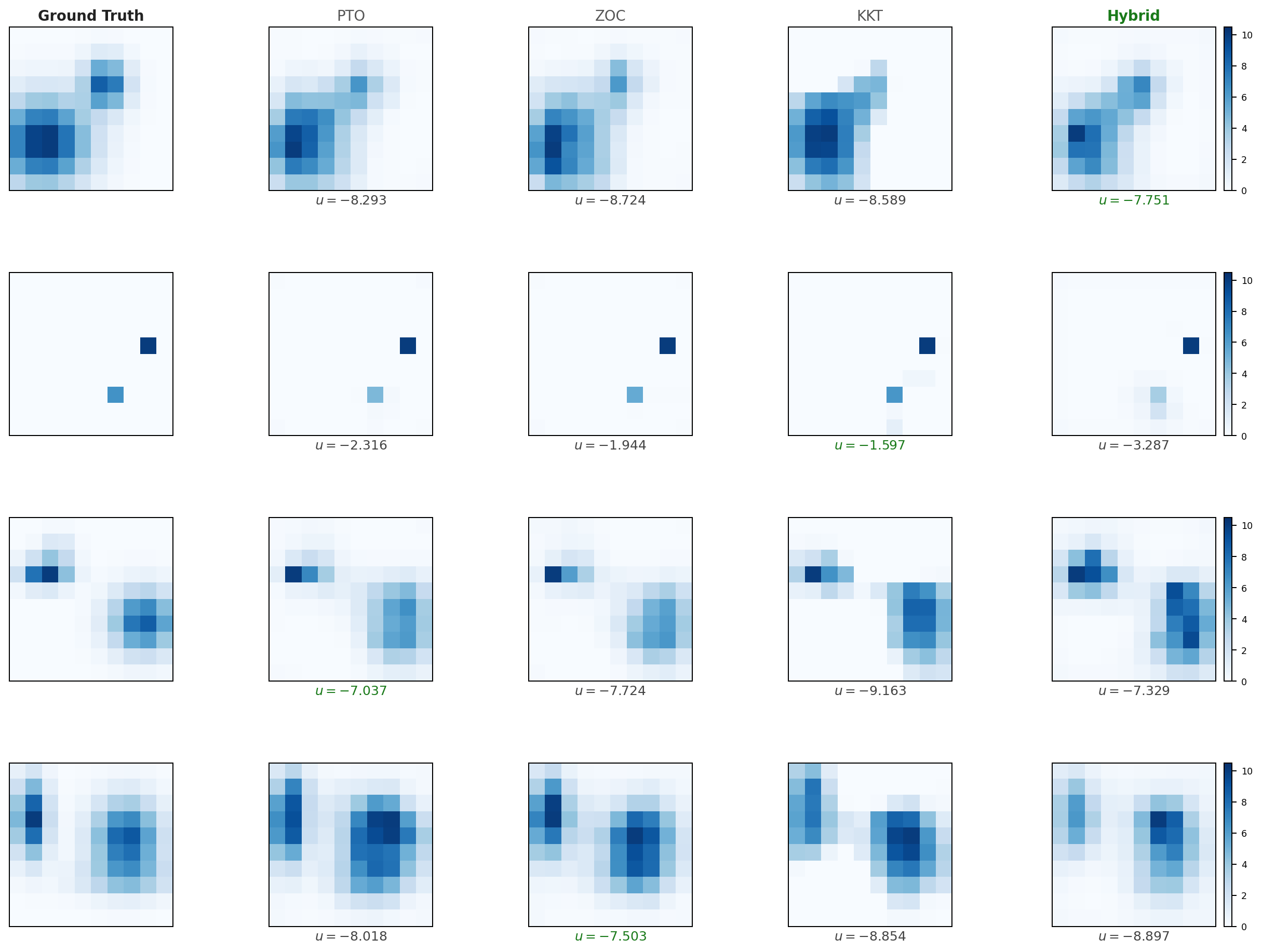}
  \caption{%
    Four qualitative regimes on the synthetic $10\times10$ security game.
    Each row is one test scenario; columns show the ground-truth density
    and the prediction of each method.
    The winning method (highest defender utility against the true density)
    is labelled in \textcolor{green!50!black}{\textbf{bold green}};
    utility $u$ is reported below each prediction.
    \textbf{Row 1} (\emph{multi-modal blob}): \textbf{Hybrid} wins.
    \textbf{Row 2} (\emph{sparse spike}): \textbf{KKT} wins .
    \textbf{Row 3} (\emph{two separated modes}): \textbf{PTO} wins.
    \textbf{Row 4} (\emph{single blob with secondary cluster}): \textbf{ZOC}
    wins.%
  }
  \label{fig:regime-main}
\end{figure}

\begin{figure}[t]
  \centering
  \begin{subfigure}[b]{\linewidth}
    \includegraphics[width=\linewidth]{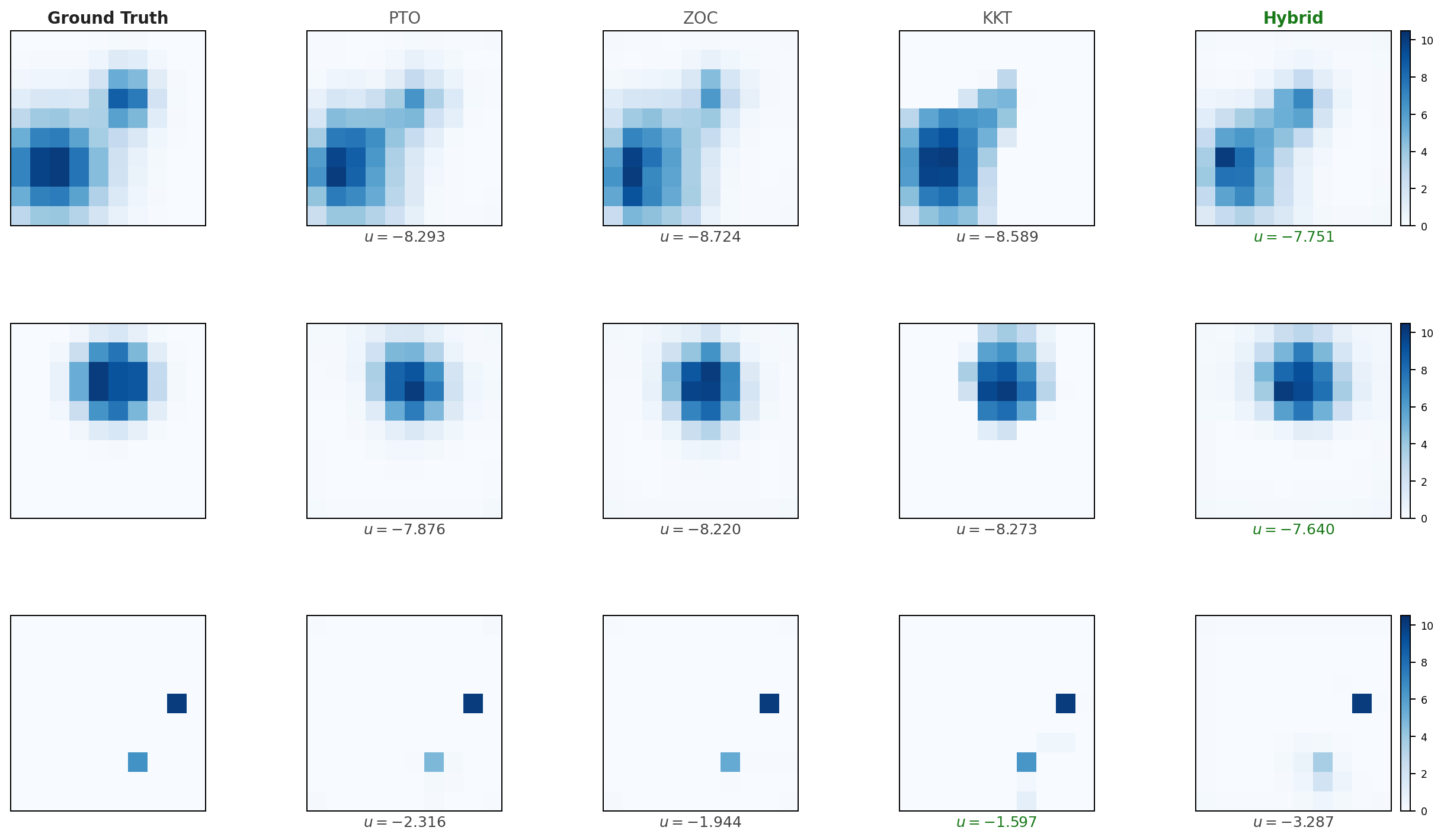}
    \caption{%
      Two scenarios where \textbf{Hybrid} wins (rows 1--2) alongside a
      scenario where \textbf{KKT} wins (row 3).%
    }
    \label{fig:hybrid-kkt}
  \end{subfigure}

  \vspace{1em}

  \begin{subfigure}[b]{\linewidth}
    \includegraphics[width=\linewidth]{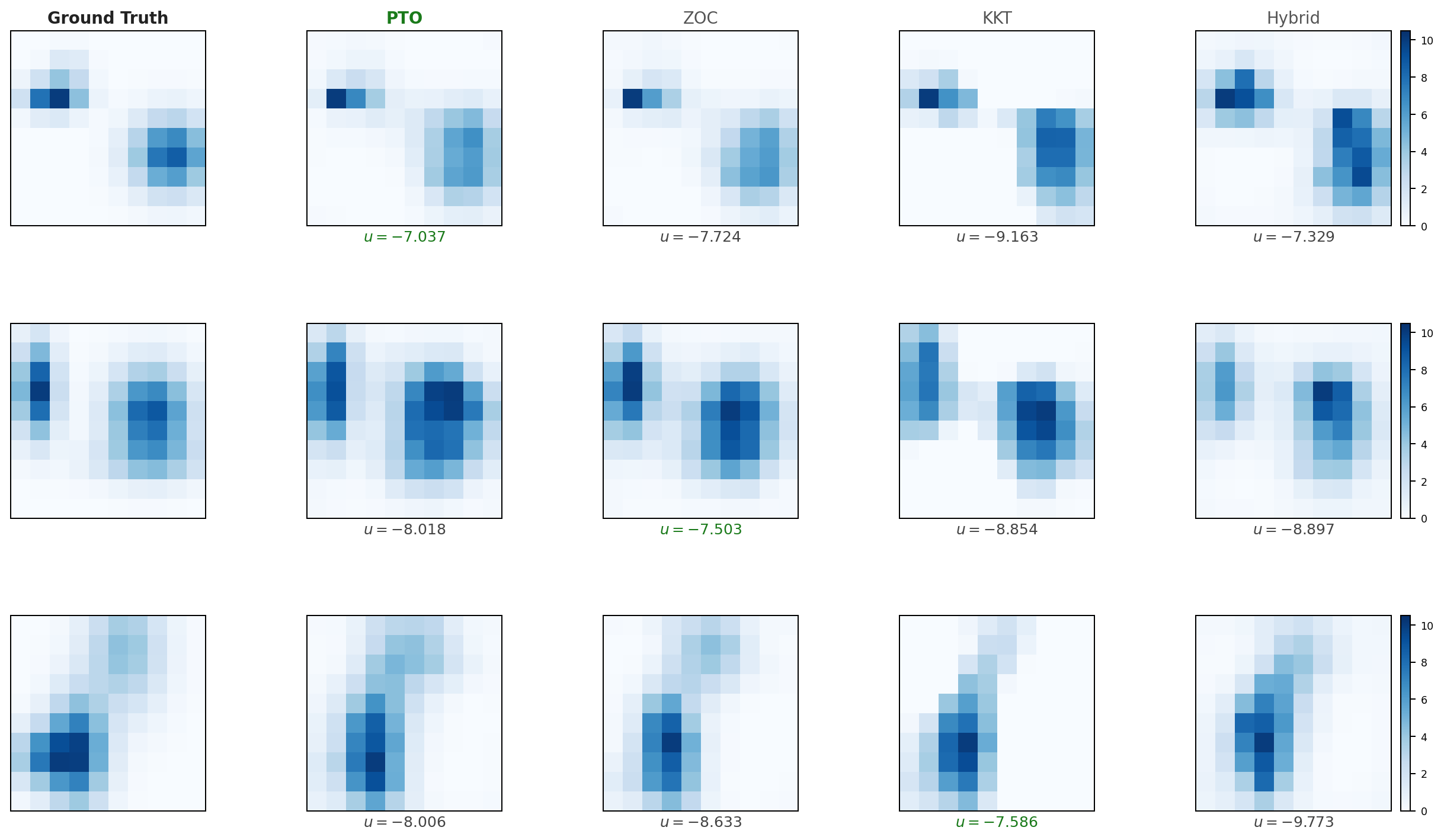}
    \caption{%
      \textbf{Row 1}: PTO wins.
      \textbf{Row 2}: ZOC wins.
      \textbf{Row 3}: KKT wins.%
    }
    \label{fig:failures}
  \end{subfigure}
  \caption{%
    Scenario-level breakdown illustrating the conditions under which
    Hybrid, KKT, PTO, and ZOC each succeed or fail.
    Aggregate metrics obscure these structural differences; the regime
    panels above make them explicit.%
  }
  \label{fig:breakdown}
\end{figure}
\FloatBarrier
\newpage

\end{document}